%File: formatting-instructions-latex-2025.tex
%release 2025.0
\documentclass[letterpaper]{article} % DO NOT CHANGE THIS
\usepackage{aaai25}  % DO NOT CHANGE THIS
\usepackage{times}  % DO NOT CHANGE THIS
\usepackage{helvet}  % DO NOT CHANGE THIS
\usepackage{courier}  % DO NOT CHANGE THIS
\usepackage[hyphens]{url}  % DO NOT CHANGE THIS
\usepackage{graphicx} % DO NOT CHANGE THIS
\urlstyle{rm} % DO NOT CHANGE THIS
  % DO NOT CHANGE THIS
\usepackage{natbib}  % DO NOT CHANGE THIS AND DO NOT ADD ANY OPTIONS TO IT
\usepackage{caption} % DO NOT CHANGE THIS AND DO NOT ADD ANY OPTIONS TO IT
\frenchspacing  % DO NOT CHANGE THIS
\setlength{\pdfpagewidth}{8.5in}  % DO NOT CHANGE THIS
\setlength{\pdfpageheight}{11in}  % DO NOT CHANGE THIS

\usepackage{chrkroer}

% These are recommended to typeset algorithms but not required. See the subsubsection on algorithms. Remove them if you don't have algorithms in your paper.
\usepackage{algorithm}
\usepackage{algorithmic}
\usepackage{amsmath}
\usepackage{amssymb}
\usepackage{mathtools}
\usepackage{natbib}

\usepackage{subcaption}
\usepackage{array}

\usepackage{mathtools}
\usepackage{natbib}

\usepackage{tikz}
\usetikzlibrary{arrows,shapes,automata,calc,matrix,backgrounds,petri, positioning}
\allowdisplaybreaks
\newcommand{\real}{\mathbb{R}}

%
% These are are recommended to typeset listings but not required. See the subsubsection on listing. Remove this block if you don't have listings in your paper.
\usepackage{newfloat}
\usepackage{listings}
\DeclareCaptionStyle{ruled}{labelfont=normalfont,labelsep=colon,strut=off} % DO NOT CHANGE THIS
\lstset{%
	basicstyle={\footnotesize\ttfamily},% footnotesize acceptable for monospace
	numbers=left,numberstyle=\footnotesize,xleftmargin=2em,% show line numbers, remove this entire line if you don't want the numbers.
	aboveskip=0pt,belowskip=0pt,%
	showstringspaces=false,tabsize=2,breaklines=true}
\floatstyle{ruled}
\newfloat{listing}{tb}{lst}{}
\floatname{listing}{Listing}
%
% Keep the \pdfinfo as shown here. There's no need
% for you to add the /Title and /Author tags.
\pdfinfo{
/TemplateVersion (2025.1)
}

\setcounter{secnumdepth}{2} %May be changed to 1 or 2 if section numbers are desired.

% The file aaai25.sty is the style file for AAAI Press
% proceedings, working notes, and technical reports.
%

% Title

% Your title must be in mixed case, not sentence case.
% That means all verbs (including short verbs like be, is, using,and go),
% nouns, adverbs, adjectives should be capitalized, including both words in hyphenated terms, while
% articles, conjunctions, and prepositions are lower case unless they
% directly follow a colon or long dash
\title{Commitment to Sparse Strategies in Two-Player Games}
\author {
    % Authors
    Salam Afiouni\textsuperscript{\rm 1},
    Jakub \v{C}ern\'{y}\textsuperscript{\rm 1},
    Chun Kai Ling\textsuperscript{\rm 2},
    Christian Kroer\textsuperscript{\rm 1}
}
\affiliations {
    % Affiliations
    \textsuperscript{\rm 1}Columbia University, USA\\
    \textsuperscript{\rm 2}National University of Singapore, Singapore\\
    sa4316@columbia.edu, chunkail@nus.edu.sg, 
    % \{jakub.cerny,christian.kroer\}@columbia.edu
    jakub.cerny@columbia.edu, christian.kroer@columbia.edu
}

% REMOVE THIS: bibentry
% This is only needed to show inline citations in the guidelines document. You should not need it and can safely delete it.
% \usepackage{bibentry}
% END REMOVE bibentry

\begin{document}

\maketitle

\begin{abstract}
While Nash equilibria are guaranteed to exist, they may exhibit dense support, making them difficult to understand and execute in some applications. In this paper, we study $k$-sparse commitments in games where one player is restricted to mixed strategies with support size at most $k$. Finding $k$-sparse commitments is known to be computationally hard. We start by showing several structural properties of $k$-sparse solutions, including that the optimal support may vary dramatically as $k$ increases. These results suggest that naive greedy or double-oracle-based approaches are unlikely to yield practical algorithms. We then develop a simple approach based on mixed integer linear programs (MILPs) for zero-sum games, general-sum Stackelberg games, and various forms of structured sparsity. We also propose practical algorithms for cases where one or both players have large (i.e., practically innumerable) action sets, utilizing a combination of MILPs and incremental strategy generation. We evaluate our methods on synthetic and real-world scenarios based on security applications. In both settings, we observe that even for small support sizes, we can obtain more than $90\%$ of the true Nash value while maintaining a reasonable runtime, demonstrating the significance of our formulation and algorithms.
\end{abstract}

\section{Introduction}
Equilibrium finding in games has emerged as a key component of artificial intelligence, owing to applications in recreational game playing such as poker, Stratego, and Diplomacy, numerous applications in security, as well as logistics~\citep{bowling2015heads,brown2018superhuman,perolat2022mastering,meta2022human,pita2008deployed,fang2015security,tambe2011security,jain2010software,cerny2024contested}. Central to the success of these applications are efficient approaches towards finding the optimal \textit{randomized}, or \textit{mixed} equilibria in their respective settings. Not only are randomized equilibria typically guaranteed to exist, they allow for the prescription of behavior that hedges over all possible opponent actions, making them particularly useful in zero-sum games. Nonetheless, a persistent difficulty faced by practitioners seeking to apply game theoretic solutions into the real world is that such solutions exhibit \textit{dense support}, i.e., they could randomize over a large number of actions. Strategies with dense support are undesirable for many reasons; for instance, they are difficult to interpret and implement in practice, and are often faced with numerical issues if used for downstream processing. 

In this paper, we propose overcoming these limitations by computing $k$-sparse solutions as opposed to Nash equilibria. Loosely speaking, a distinguished player termed the sparse player (representing the practitioner) is restricted to playing a mixed strategy with a support size at most $k$. The other player is free to play any mixed strategy they desire. We argue that in many practical use-cases, sparse strategies can almost match the performance of dense strategies, as long as the support of the sparse player is carefully chosen.

The problem of finding sparse solutions has been studied from various angles. \citet{althofer1994sparse} show that sparse approximations to strategies exist in zero-sum games without too much degradation, while seminal work by \citet{lipton2003playing} show that \textit{approximate} sparse Nash equilibria always exist via a sampling argument (\citet{feder2007approximating} show the optimality of these results); follow up work extends this argument to correlated and other equilibria~\citep{babichenko2014simple}.
Nonetheless, these arguments are nonconstructive in nature, and finding ``good'' sparse equilibria can be difficult. Indeed, finding the \textit{optimal} $k$-sparse commitment even for zero-sum games is in general intractable \cite{mccarthy2018price}. 
\citet{foster2023hardness} study a variant of sparsity in Markov games where correlated equilibria distributions are expressible as a mixture of $k$ uniform product distributions, while \citet{anagnostides2023complexity} show hardness of computing sparse correlated equilibria for extensive form games.
The line of work closest to ours is by \citet{mccarthy2018price}, who study $k$-sparse solutions under the name of \textit{operationalizable} strategies, mainly from the perspective of security games. Our work differs from them in several ways. First, our algorithm is not restricted to finding $k$-uniform strategies, i.e., each of the $k$ actions (possibly repeated) must be chosen with equal probability. Crucially, we significantly expand the scope of applications beyond that of security games, including large games with MILP representable strategy spaces.

For the majority of theoretical and algorithmic work above, the term ``sparse'' is formally defined by asymptotics in terms of the size of the game (e.g., the number of players, and the size of their actions). 
For our contributions, we adopt colloquial, but often times more practically useful, meanings for the terms sparse and dense, with sparse meaning ``small enough'' (e.g., say, in single digits), while dense implies the converse. 
Secondly, a significant amount of work \cite{lipton2003playing,babichenko2014simple,mccarthy2018price} also requires sparse strategies to be $k$-uniform. This is a stronger restriction than our notion of $k$-sparsity, and it can adversely affect performance when we want a sparse strategy in the practical non-asymptotic sense where $k$ is a very small constant; we show this both empirically and theoretically in Sections~\ref{sec:theory} and \ref{sec:exp}.

With these distinctions in mind, our key contributions are as follows. First, we formalize the notion of $k$-sparse commitments for two-player zero-sum games and Stackelberg equilibrium in general-sum games. Second, we study some of the properties of $k$-sparse commitments. We show that even for zero-sum games, supports of the optimal $k$-sparse equilibrium can be disjoint for almost all $k$, and the value function as a set function of allowable support is not submodular. We also demonstrate that $k$-\textit{uniform} equilibria as computed by \citet{mccarthy2018price} can yield a much worse performance than $k$-sparse equilibrium. Thirdly, we show how optimal $k$-sparse commitments can be computed via Mixed Integer Linear Programs (MILPs) for a variety of settings, including zero-sum games, Stackelberg equilibrium, variants of structured action sets, as well as games where one or both players have a large, but MILP-representable strategy spaces. Finally, we demonstrate empirically the efficacy of our proposed method, using both randomly generated normal-form games, as well as settings based on security applications. Our results demonstrate scalability of our method which also often performs better than $k$-uniform strategies. Even for small $k$, we are often able to capture nearly $90\%$ of the true optimal commitment, showing that $k$-sparse commitments are a practical and viable.
\section{Preliminaries}
We are concerned with two-player games, where Player 1 and 2 have $n$ and $m$ actions respectively. We denote their payoff matrices by $A, B \in \mathbb{R}^{n \times m}$, such that $A_{ij}$ (resp. $B_{ij})$) gives Player 1's payoff when Player 1 plays action $i$ and Player 2 plays action $j$. 
The spaces of mixed strategies for each player are given by $\Delta^n$ and $\Delta^m$ respectively, where $\Delta^n$ is the probability simplex $\{ x \in \Rplus^n | \sum_i^n x_i = 1\}$. For a vector $x \in \mathbb{R}^d$, we denote its support by $supp(x) = \{ i \in [d] | x_i \neq 0 \}$. 
The \textit{best-response} of Player 2 against Player 1's (mixed) strategy $x \in \Delta^n$ is any pure strategy $\text{BR}_2(x) = \argmax_{y \in [m]} x^T A y$, where ties are broken arbitrarily. When $y$ is a non-negative integer, we adopt the shorthand $x^T A y = x A e_y$, where $e_i$ is the $i$-th elementary basis vector. Best-responses of Player 1 are similarly denoted by $\text{BR}_1(y)$. 

\paragraph{Nash equilibrium (NE) in zero-sum games}
Recall that in a zero-sum game, $A=-B$. 
Finding a Nash equilibrium $(x^*,y^*)$ in a two-player zero-sum game can be formulated as a saddle-point problem using the payoff matrix $A$ of the first player.

In particular, von Neumann's \textit{minimax theorem} shows that $(x^*, y^*)$ is achieved by the saddle point of $x^T A y$,
\begin{align}
    \max_{x\in\Delta^n} \min_{y \in \Delta^m} x^T A y = \min_{y \in \Delta^m} \max_{x\in\Delta^n} x^T A y = {x^*}^T A y^*.
\label{eq:minimax-theorem}
\end{align}
The minimax theorem shows that $(x^*, y^*)$ is a NE if and only if each mixed strategy optimizes the players' payoff assuming that the other player best responds. 
Two-player zero-sum games can be solved efficiently using a variety of methods, including fictitious play \cite{brown:fp1951}, linear programming \cite{shoham2008multiagent}, and self-play with no-regret learners \cite{hart2000simple}.

\paragraph{Sparse commitments in zero-sum games}
We now consider a scenario where Player 1 is limited to playing strategies with support at most $k$. For the rest of the paper, we refer to Player 1 (resp. Player 2) as the sparse player (resp. the non-sparse player) and vice versa.
Player 1's restricted strategy space is given by
\begin{align*}
\Delta^n_k = \{ x \in \Delta^n \text{ such that } |supp(x)| \leq k \},
\end{align*} 
Then Player 1's set of optimal bounded-support strategies is
\begin{align}
    x_k^* = \argmax_{x\in\Delta^n_{k}}\min_{y\in[m]} x^TAy. 
    \label{eq:sparse-commitment-def}
\end{align}
Note that $\Delta^n_{S}$ 
is not convex due to the cardinality constraint. Thus, the minimax theorem \eqref{eq:minimax-theorem} does not hold. Instead, we have the weaker result $\max_{x\in\Delta^n_k}\min_{y\in\Delta^m} x^TAy \leq \min_{y\in\Delta^m}\max_{x\in\Delta^n_k} x^TAy$.

\begin{remark}
In our definition, only one player is subjected to sparsity constraints; the other (i.e., inner) player is allowed to play any mixed strategy it desires, even one which is not sparse. This assumption has no bearing on our results, since at least one of Player 2's best responses will be pure.
\end{remark}
\paragraph{Strong Stackelberg equilibrium (SSE) in general-sum games}
In general-sum two-player games, where $A$ and $B$ may be arbitrary $n \times m$ matrices, a Stackelberg equilibrium $(x^*, y^*)$ can be formulated as a bilevel problem~\cite{von2004leadership,conitzer2006computing}:
\begin{align*}
    x^* = \argmax_{x\in\Delta^n} x^TAy^*(x), \text{where} \quad
    y^*(x) = \argmax_{y\in[m]} x^TBy.
\end{align*}
The equilibrium is considered \textit{strong} if, in addition, Player 2 breaks ties in favor of Player 1. In two player zero-sum games, NE and SSE coincide for Player 1, i.e., the optimal strategy $x^*$ that Player 1 commits to is also a NE strategy. For this paper, we always assume that Player 1 is playing the role of the Stackelberg leader (outer optimization problem). 

The SSE applies to general-sum games, always exists, enjoys a unique payoff, and can be computed in polynomial time, making it a popular choice of equilibrium for security applications \cite{sinha2018stackelberg,tambe2011security}.

\paragraph{Sparse commitments in general-sum games} Just like zero-sum games, we define optimal $k$-sparse commitments via the restricted strategy spaces $\Delta^n_k$. The strategy $x^*_k$ is then 
\begin{align*}
    x_k^* = \argmax_{x\in\Delta_k^n} x^TAy^*(x), \text{where} \quad
    y^*(x) = \argmax_{y\in \text{BR}_2(x)} x^T Ay.
\end{align*}

It is important to differentiate between \textbf{natural sparsity} in game solutions, which refers to the \textit{existence} of equilibria with small support, and the \textbf{enforced sparsity} in our context, which typically involves \textit{trade-offs} in terms of strictly reduced performance for the sparse player. Although many structured games (e.g., extensive form games or other games played on graphs) exhibit naturally sparse equilibria, these are typically not sparse enough to be practically applicable.

\section{Structure of Sparse Equilibria and the Limits of Na\"{\i}ve Sparsification Methods}\label{sec:theory}

We now explore pathological properties of sparse equilibria which will justify the necessity of our proposed method over more intuitive or heuristic approaches.  
For one, it is tempting to believe that the solutions $x^*_k$ and $x^*_{k'}$ for $k < k'$ are ``close'' in some sense. Unfortunately, this is  not the case: $supp(x^*_k)$ and $supp(x^*_{k'})$ can be completely disjoint.

\begin{proposition}
    There exist zero-sum games where the optimal sparse commitments $x_{k}^*$, $x_{k'}^*$ for $2 \leq k<k'\leq \sqrt{2n}$ have disjoint supports, i.e.,
    $\text{supp}(x_{k}^*)\cap\text{supp}(x_{k'}^*)=\emptyset.$
    \label{thm:disjoint_support}
\end{proposition}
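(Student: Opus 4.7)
The plan is to exhibit an explicit family of $n$-row zero-sum payoff matrices in which, for every $k \in \{2, 3, \ldots, K\}$ with $K = \lfloor \sqrt{2n} \rfloor$, a designated block $R_k$ of $k$ rows is the unique support of the optimal $k$-sparse commitment. Making the $R_k$ pairwise disjoint is feasible because $\sum_{k=2}^K |R_k| = K(K+1)/2 - 1 \leq n$ for this choice of $K$; any remaining rows can be filled with strictly dominated entries and ignored.

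The construction I propose is block-structured. For each $k$, I associate $R_k$ with a dedicated set $C_k$ of $k$ ``own'' columns linked by a fixed bijection $r \sim c$, and set the entries on $R_k \times C_k$ to $v_k + \delta/k - \delta \, \mathbb{1}[r \sim c]$, where $v_k := 1 - \alpha/k$ is strictly increasing in $k$ and $\alpha, \delta > 0$ are constants with $\alpha \ll \delta$. For cross-block entries, i.e., $r \in R_k$ and $c \in C_{k'}$ with $k \neq k'$, I set $A_{r, c} = v_K$. A short calculation shows that the uniform strategy on $R_k$ yields payoff exactly $v_k$ on every column in $C_k$ (the diagonal perturbations cancel) and $v_K \geq v_k$ on all cross-block columns, so its guaranteed value equals $v_k$.

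To prove $R_k$-uniform is the unique $k$-sparse optimum, I analyze an arbitrary $k$-sparse strategy placing mass $q_j$ uniformly on $s_j$ rows of $R_j$ with $\sum_j s_j = k$. The opponent's best pure response attains
\begin{align*}
V(x) = v_K + \min_{j:\, s_j \geq 1}\, q_j \Bigl( v_j - v_K + \tfrac{\delta(s_j - j)}{s_j\, j} \Bigr),
\end{align*}
realized by a column in $C_j$ matched to a row in support. Each summand is non-positive since $v_j \leq v_K$ and $s_j \leq j$, so $V(x) \leq v_K$, with the single-block value $V = v_k$ achieved iff $s_k = k$ and $q_k = 1$. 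Two classes of deviations must be ruled out: (i) placing all $k$ rows inside $R_{k'}$ with $k' > k$ gives value $v_{k'} - \delta(k' - k)/(kk')$, strictly less than $v_k$ because the regime $\alpha < \delta$ forces $v_{k'} - v_k < \delta(k' - k)/(kk')$; (ii) any mix splitting support across multiple blocks reduces, at the balanced $q$, to $V = v_K - 1/\sum_{j'} A_{j'}^{-1}$, with $A_j := v_K - v_j + \delta(j - s_j)/(s_j j) > 0$, which is strictly below $v_k$ because every active block contributes a strictly positive penalty.

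The main obstacle is the split-support case (ii): a naive block-diagonal design suffers from ties (e.g., filling $R_2$ and adding a row from $R_3$ matches uniform-over-$R_3$ in value). The perturbation $(\delta/k)(J_k - k I_k)$ breaks those ties by penalizing any block that is occupied but under-covered, while the regime $\alpha \ll \delta$ ensures the small gaps $v_{k'} - v_k = \Theta(\alpha/(kk'))$ are dominated by the per-missing-row penalty $\Theta(\delta/(kk'))$. Combining the two cases yields strict optimality of $R_k$-uniform for every $k$ in the claimed range, so $\text{supp}(x_k^*) = R_k$, and the $R_k$ are pairwise disjoint by construction.
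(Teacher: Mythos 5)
Your construction is, at its core, the same as the paper's: partition the rows into pairwise-disjoint blocks of sizes $2,3,\dots,K$ with $K=\Theta(\sqrt{n})$, make each block an order-$j$ matching-pennies-type game whose uniform strategy is the only way to neutralize the opponent inside that block, and add a block-indexed bonus that strictly increases with block size so that the unique optimal $k$-sparse commitment is uniform on block $k$. (The paper realizes the bonus as a base payoff $2^{i-1-N}$ and the tie-breaking as a fight payoff with huge stakes $r=\Omega(N^2)$; you realize them as $v_k=1-\alpha/k$ and the perturbation $(\delta/k)(J_k-kI_k)$, with cross-block entries pinned at $v_K$. Your two deviation cases correspond to the paper's Lemmata ruling out mass above and below block $k$.) Your exact best-response formula $V(x)=v_K+\min_j q_j(v_j-v_K+\delta(s_j-j)/(s_j j))$ is correct, as is case (i).

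There is, however, a gap in case (ii) as written. Ruling out split supports requires $v_K-1/\sum_j A_j^{-1}<v_k$, i.e., $\sum_j A_j^{-1}<1/(v_K-v_k)=kK/(\alpha(K-k))$. Your stated justification --- ``every active block contributes a strictly positive penalty'' --- only yields $1/\sum_j A_j^{-1}>0$, hence $V<v_K$, which is not the needed bound; and your quantitative remark (comparing $\Theta(\alpha/(kk'))$ gaps to $\Theta(\delta/(kk'))$ penalties) addresses case (i), not this sum-of-reciprocals inequality. The claim is nevertheless true under $\delta\ge\alpha$, and the missing step is a per-block budget inequality: one checks that $A_j^{-1}<s_j\cdot K/(\alpha(K-k))$ for every occupied block of a genuine split, which after clearing denominators reduces to $\delta K(j-s_j)\ge\alpha\bigl(K(j-s_j)-j(k-s_j)\bigr)$, valid (and strict, since $s_j<k$ for every block of a split) whenever $\delta\ge\alpha$ and $s_j\le j$; summing over blocks and using $\sum_j s_j=k$ gives the required strict bound. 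You should also note explicitly that restricting to within-block uniform weights is without loss for Player 1 (uniform minimizes the maximum per-row mass that the matched column exploits). With those two additions the argument is complete; the paper avoids this bookkeeping by making the fight stakes so large that any under-covered block already yields a strictly negative total contribution.
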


Despite Proposition~\ref{thm:disjoint_support}, one may hope that even if $supp(x^*_k)$ and $supp(x^*_{k'})$ differ greatly, player 1's expected utility under them may be somewhat nicely behaved. Define $v^*_k=\max_{x\in\Delta^n_k}\min_{y\in[m]} x^TAy$, the reward to the first player under the optimal size $k$ commitment. Clearly, $v^*_k$ is non-decreasing in $k$. 
However, the utility of the first player is not necessarily ``concave'' in $k$, i.e., the marginal returns from increasing $k$ is not necessarily diminishing. 
\begin{proposition}[non-diminishing marginal returns] 
    There exist zero-sum games where $v^*_{k+1} - v^*_k$ is not non-increasing. 
    \label{thm:nonconvex_utility}
\end{proposition}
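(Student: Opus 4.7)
The plan is to settle the proposition by exhibiting a minimal explicit counterexample: the $n \times n$ identity game $A = I_n$ for some $n \geq 3$. Because the statement is purely existential, a single clean family of instances is enough.

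First I would collapse Player~2's inner problem. Since $A e_j = e_j$, for any $x \in \Delta^n$ we have $x^T A e_j = x_j$, so the inner best response yields $\min_{j \in [n]} x_j$ and the sparse problem reduces to $v^*_k = \max_{x \in \Delta^n_k} \min_{j \in [n]} x_j$. From here, $v^*_k$ splits cleanly into two regimes. Whenever $k < n$, any feasible $x$ has at least one zero coordinate, so $\min_j x_j = 0$ and therefore $v^*_k = 0$. When $k = n$, the uniform distribution is feasible and achieves $1/n$, which is also the maximum possible value of $\min_j x_j$ subject to $\sum_j x_j = 1$.

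Putting these together, the marginal sequence $\{v^*_{k+1} - v^*_k\}_{k=1}^{n-1}$ equals $0$ for all $k \in \{1, \ldots, n-2\}$ and jumps to $1/n$ at $k = n - 1$, which is strictly larger than the preceding entries and hence not non-increasing. This proves the proposition. The argument has no real obstacle: the only creative step is choosing a game in which Player~2's best response depends discontinuously on whether Player~1's support is full, and the identity matrix is perhaps the cleanest object with this property; everything else is a direct one-line calculation.
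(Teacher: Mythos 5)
Your proof is correct, but it takes a genuinely different route from the paper's. The paper reuses the nested matching-pennies construction built for Proposition~1, in which $v^*_k = 2^{k-1-N}$ for $2 \le k \le N$; there the marginals $v^*_{k+1}-v^*_k = 2^{k-1-N}$ are strictly (indeed exponentially) \emph{increasing at every step}, and the same game simultaneously witnesses the disjoint-support and non-submodularity claims. The cost is that it leans on the nontrivial analysis (Theorem~1 and its two lemmata) needed to pin down $v^*_k$ exactly. Your identity-matrix example is self-contained and essentially a one-line computation: for $k<n$ Player~2 can always point at a zero coordinate, so $v^*_k=0$, while $v^*_n = 1/n$, and the marginal sequence $(0,\dots,0,1/n)$ is not non-increasing for $n\ge 3$. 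That fully settles the existential statement. What you give up is the strength and the flavor of the phenomenon: your value function is flat at zero until the very last step, so it exhibits a single terminal jump rather than increasing returns throughout the range of $k$, and it does so via a game where every sub-full-support commitment is worthless --- a degeneracy the paper deliberately avoids, since it remarks that its counterexamples are ``not over-engineered'' and that the same behavior shows up in its experiments. If one only needs the proposition as literally stated, your argument is the more elementary proof; if one wants to argue that naive strategy-generation heuristics can be misled by a long plateau followed by growth (the algorithmic moral the paper draws), the paper's example makes that point more convincingly.
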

\begin{proposition}[non-submodularity]
    Let $S \subseteq [n]$ and $\Delta_S^n = \{ x \in \Delta^n | x_i = 0 \quad \forall i \not \in S \}$. Let $x^*_S = \argmax_{x \in \Delta^n_S} \min_{y \in [m]} x^T A y$ be the optimal commitment by Player 1 when restricted to playing only $S$, such that $v^*_S$ is a set function (of $S$) denoting its corresponding utility. There exist zero-sum games where $v^*_S$ is not submodular in $S$.
    \label{thm:nonsubmodular}
\end{proposition}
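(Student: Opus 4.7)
The plan is to exhibit a small zero-sum counterexample in which the marginal value of adding a new action \emph{grows} rather than diminishes as the base support enlarges, directly contradicting submodularity. Concretely, I would take $n = m = 3$ with payoff matrix
\begin{equation*}
A = \begin{pmatrix} 0 & -1 & -1 \\ -1 & 0 & -1 \\ -1 & -1 & 0 \end{pmatrix},
\end{equation*}
which models Player 1 choosing one of three targets to defend and Player 2 choosing one to attack, with a unit loss on any undefended target.

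The central observation driving the counterexample is a threshold effect: whenever Player 1's allowed support $S$ is a proper subset of $[3]$, Player 2 can simply attack some $j \in [3] \setminus S$, forcing payoff $-1$ against every $x \in \Delta_S^3$. Hence $v^*_S = -1$ for every nonempty $S \subsetneq [3]$. In contrast, once $S = [3]$ the full game is symmetric and the uniform strategy $x = (1/3, 1/3, 1/3)$ is optimal, giving $v^*_{[3]} = -2/3$.

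With these values computed, I would instantiate the submodularity inequality at $S = \{1\}$, $T = \{1, 2\}$, $i = 3$: the marginal of adding action $3$ to $\{1\}$ is $v^*_{\{1,3\}} - v^*_{\{1\}} = 0$, while the marginal of adding $3$ to $\{1,2\}$ is $v^*_{\{1,2,3\}} - v^*_{\{1,2\}} = 1/3$. The diminishing-returns inequality $v^*_{S \cup \{i\}} - v^*_S \geq v^*_{T \cup \{i\}} - v^*_T$ is thereby violated, establishing that $v^*_S$ is not submodular.

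There is no substantial obstacle, as both value computations follow immediately from the symmetry of $A$. Conceptually the construction reinforces the message of Propositions~\ref{thm:disjoint_support} and~\ref{thm:nonconvex_utility}: sparse commitments exhibit threshold-like behavior in which all insufficiently large supports incur the worst-case loss, and only reaching a critical mass of actions unlocks useful randomization, which is precisely the opposite of diminishing marginal returns.
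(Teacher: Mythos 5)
Your proof is correct: in your $3\times 3$ matrix the computations $v^*_S=-1$ for every nonempty proper $S$ (Player 2 attacks an uncovered target, and every entry of $A$ is at least $-1$) and $v^*_{[3]}=\min_j x_j - 1 = -2/3$ at the uniform strategy are both right, and the instantiation $S=\{1\}$, $T=\{1,2\}$, $z=3$ gives marginals $0$ versus $1/3$, violating diminishing returns. However, your route is genuinely different from the paper's. The paper does not build a fresh example; it reuses the family of ``high-order matching pennies'' games constructed for Proposition~\ref{thm:disjoint_support}, taking $S=\{a_{2,0},a_{2,1}\}$, $S'=\{a_{2,0},a_{2,1},a_{3,0},a_{3,1}\}$, and $z=a_{3,2}$, and arguing via the pigeonhole/best-response analysis of Theorem~\ref{thm:best-k-sparse-commitment} that the value stays at $2/2^N$ until the \emph{entire} block $\mathcal{A}^{(3)}$ becomes available, at which point it jumps to $4/2^N$. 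Your construction buys simplicity and self-containment --- the threshold effect (no proper subset of targets yields anything better than the worst-case loss) is visible by inspection, with no need for the $r=\Omega(N^2)$ scaling or the lemmata ruling out mass outside a block. What the paper's approach buys is economy: a single game family simultaneously witnesses disjoint supports, non-diminishing marginal returns, and non-submodularity, which makes the three pathologies feel like facets of one phenomenon rather than three separate curiosities. Either proof establishes the proposition.
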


We also remark that the counterexamples in Propositions~\ref{thm:disjoint_support} and \ref{thm:nonconvex_utility} are not ``over-engineered'', we observe similar phenomena in our experiments.
The above propositions have consequences for game solving. It is known that the problem of finding the optimal sparse strategy is NP-hard, even when the NE or the SSE can be found in polynomial time~\cite{mccarthy2018price}. 
Propositions~\ref{thm:disjoint_support}, \ref{thm:nonconvex_utility} and \ref{thm:nonsubmodular} go further, implying a combinatorial structure incompatible with some popular game solving approaches. 

One such approach is that of \textit{incremental strategy generation}, also known as \textit{oracle-based methods}. A popular variant known as the \textit{double-oracle} method is depicted in the Appendix. These methods begin with a restricted (usually very small) set of actions, 
then an  equilibrium is computed for the subgame restricted to those actions, 
and the subgames are iteratively expanded by querying 
\textit{best-response oracles} for new actions to add.
This is guaranteed to converge in finite though potentially exponential time \cite{zhang2024exponential}. Oracle methods exploit the observation that in real-world games best responses can often be found or approximated efficiently, and usually the number of iterations of double oracle is low.

At first glance, incremental strategy generation seems appealing in our setting, as it naturally produces sparse strategies if the algorithm terminates before the restricted action set becomes larger than $k$. 
Unfortunately, it turns out that this approach, while guaranteed to produce $k$-sparse strategies, can yield low-quality commitments. For example, Proposition~\ref{thm:disjoint_support} implies that the restricted set needs to be of size $\mathcal{O}(k^2)$ if it wants to include all optimal $1,2,\dots, k$-sparse strategies. In fact, Proposition~\ref{thm:nonconvex_utility} shows that for very large games (where the true NE may not be easily computed), one may be misled into prematurely terminating if increasing $k$ does not yield an improvement in $v^*_k$. Finally, Proposition~\ref{thm:nonsubmodular} implies that a greedy-based approach that greedily expands the allowed support of Player 1's commitment (one at a time) is not guaranteed to yield a good approximate commitment, even for zero-sum games.

So far, our statements have been about the supports of the sparse equilibrium. Can we say something about the probabilities that each action is played? For instance, the algorithm of \citet{mccarthy2018price} proposes finding ``operationalizable'' commitments where Player 1 is restricted to playing $k$-uniform strategies. Unfortunately, it turns out that $k$-uniform strategies can perform badly for small $k$. 

\begin{proposition}
    Let $\bar{\Delta}^n_k = \{ x \in \Delta^n | x_i = m/k, m \in \mathbb{Z} \}$ be the set of randomized strategies played uniformly over exactly $k$ strategies (possibly repeated). Define the corresponding sparse equilibrium
    $\bar{x}^*_k = \argmax_{x \in \bar{\Delta}^n_k} \min_{y \in [m]} x^T A y$ and its value $\bar{v}^*_k$. 
    Then, (i) $\bar{v}^*_k$ is no longer non-decreasing in $k$, and
    (ii) there exist games where finding an $\epsilon$-NE operationalizable strategy $\bar{x}^*_k$ requires $k$ to grow in a rate $\Omega(1/(\epsilon+c))$, where $c$ is an adjustable game parameter.  
    \label{thm:unif-is-bad}
\end{proposition}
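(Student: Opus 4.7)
I would establish each part via a small, concrete counterexample that isolates the underlying obstruction for $k$-uniform strategies.

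For part (i), I would take matching pennies, $A = \begin{pmatrix} 1 & -1 \\ -1 & 1 \end{pmatrix}$, whose NE strategy $(1/2, 1/2)$ lies in $\bar{\Delta}^2_2$ but not in $\bar{\Delta}^2_3$. The $k = 2$ case yields $\bar{v}^*_2 = 0$ via $(1/2, 1/2)$, while the only non-pure $3$-uniform strategies are $(1/3, 2/3)$ and $(2/3, 1/3)$; for each, Player 2 has a pure best response giving Player 1 value $-1/3$, and the two pure strategies give value $-1$. Hence $\bar{v}^*_3 = -1/3 < 0 = \bar{v}^*_2$, witnessing non-monotonicity in $k$.

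For part (ii), I would exploit the integrality baked into $k$-uniform strategies by forcing the NE to live on a growing uniform support. Given game parameter $c = 1/n$, take $A = I_n$; the unique NE is the uniform mixture with value $v^* = 1/n = c$. Any $k$-uniform strategy $(m_1/k, \ldots, m_n/k)$ against Player 2's best response has value $\min_i m_i / k$, so the $\epsilon$-NE condition requires $\min_i m_i / k \geq c - \epsilon$. In the regime $\epsilon < c$, integrality forces $\min_i m_i \geq 1$, which in turn forces every one of the $n$ actions into the support, so $k \geq n = 1/c \geq 1/(c + \epsilon)$, establishing $k = \Omega(1/(c+\epsilon))$.

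The hard part is not the computation but delineating the regime in which the bound is informative: when $\epsilon \geq c$, any pure strategy already satisfies the $\epsilon$-NE condition (since $0 \geq c - \epsilon$), so $k = 1$ suffices and the lower bound degenerates. I would therefore frame the conclusion as a joint-asymptotic statement---as $c \to 0$ with $\epsilon$ at most $c$, the minimum $k$ for which $\bar{x}^*_k$ is an $\epsilon$-NE grows at rate $\Omega(1/(\epsilon + c))$, which is precisely the regime in which the proposition is nontrivial.
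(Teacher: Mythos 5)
Your argument is correct, but for part (ii) it takes a genuinely different route from the paper's, and for part (i) it is actually more explicit than what the paper writes down. The paper's proof works entirely within a single $2\times 2$ \emph{biased} matching pennies game with parameter $a>0$ (diagonal payoffs $\pm 1/a$, off-diagonal $\mp 1$), whose unique NE puts probability $2/(a+3)$ on $H$ and has value $v^*=\tfrac{1-a}{a(a+3)}$; it then derives, for each of Player~2's two pure responses, explicit lower bounds such as $k \geq \tfrac{(a+1)(a+3)}{a^2\epsilon+3a\epsilon+2a+2}$, which give the $\Omega(1/(\epsilon+c))$ rate as $a$ varies. Your part (i) computation on standard matching pennies (the $a=1$ member of that family) is correct: $\bar{v}^*_2=0$ while $\bar{v}^*_3=-1/3$, and this is a cleaner, self-contained witness of non-monotonicity than the paper provides. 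For part (ii), your $A=I_n$ construction with $c=1/n$ is a valid witness of the literal claim, and the integrality argument ($\epsilon<c$ forces $\min_i m_i\geq 1$, hence $k\geq n=1/c\geq 1/(\epsilon+c)$) is sound; your remark about the degenerate regime $\epsilon\geq c$ is an honesty check the paper does not make explicit. The one thing the paper's example buys that yours does not: in biased matching pennies the optimal \emph{$2$-sparse} (non-uniform) commitment already attains the exact NE value, so the same game simultaneously exhibits the gap between $k$-uniform and $k$-sparse commitments, which is the rhetorical point the surrounding text leans on. In your identity game, $k$-sparse commitments are equally handicapped (any support of size $k<n$ leaves a column Player~2 can exploit for value $0$), so your example establishes the stated lower bound for $k$-uniform strategies but does not separate them from $k$-sparse ones.
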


Proposition~\ref{thm:unif-is-bad} suggests that finding optimal sparse equilibrium is not just about reasoning about which actions belong to the optimal support; one should also jointly reason about the probabilities with which these strategies are played. In the Appendix we show an instance where our definition of a sparse commitment gives a vastly different (and higher quality) solution compared to that of \citet{mccarthy2018price}. Notably, while \citet{mccarthy2018price} argue that uniform strategies are easier to operationalize --- our example shows that this comes at a price that could be large.
\section{Finding Optimal $k$-sparse Commitments}

The non-convex set $\Delta_k^n$ (a union of polytopes) can be represented via the following set of mixed-integer constraints: %MIP:
\begin{align*}
    1&=\sum_{a\in [n]} x(a), \quad k \geq \sum_{a\in [n]} z(a), \\ 
    z(a) &\geq x(a), \quad x(a) \in [0,1],\quad z(a) \in \{ 0, 1 \} &&\forall a\in [n]. \notag
\end{align*}
This gives a mixed integer linear program (MILP) for computing $k$-sparse equilibria, which we call the \textit{basic method}:
\begin{align}\label{mip:vanilla-zerosum}
\max_{g\in\real,x\in\Delta^n_k} ~g, \quad g \leq\sum_{a\in [n]} A(a,b)x(a) &&\forall b \in [m]. \tag{B}
\end{align}
Here, $g$ represents the payoff which we are seeking to maximize, while the constraint upper bounds $g$ over all possible opponent actions; such a formulation can be obtained by \textit{dualizing} the inner minimization problem in \eqref{eq:sparse-commitment-def}.
The basic method can be implemented easily via off the shelf MILP solvers. However, computation becomes prohibitively slow when $n$ or $m$ is large. We present methods to alleviate this.

\paragraph{(i) Single oracle methods when $m$ is large but $n$ is small}

When only Player 2 (the non-sparse player) has a large strategy space, we propose a \textit{single-oracle} approach. 
For this, we assume access to Player 2's best response oracle $\text{BR}_2(x)$. We start with a subset of strategies $\mathcal{M}_0\subseteq [m]$. At each iteration $i$, we compute the optimal commitment $x_k^{\mathcal{M}_i}$ against the non-sparse player's strategy space $\mathcal{M}_i$. A best-response strategy $\text{BR}_2(x_k^{\mathcal{M}_i}))$ is then computed for the non-sparse player against the sparse player's strategy $x_k^{\mathcal{M}_i}$, using the best-response oracle. The best-response is added to the non-sparse player's strategy space, i.e. $\mathcal{M}_{i+1}\leftarrow \mathcal{M}_{i}\cup\text{BR}(x_k^{\mathcal{M}}))$, and the MIP~\eqref{mip:vanilla-zerosum} is resolved. This iterative process continues until the equilibrium gap, defined as \(\nabla = g - u(x_k^{\mathcal{M}},\text{BR}(x_k^{\mathcal{M}}))\), where $g$ is the sparse player's optimal expected utility computed by~\eqref{mip:vanilla-zerosum}, is below a pre-specified threshold \(\epsilon > 0\). When the algorithm terminates, we are guaranteed an $\epsilon$-optimal $k$-sparse commitment.

\paragraph{(ii) MILP-based method when $n$ is large but $m$ is small}
When only $n$ is large, the basic method has a prohibitively large number of integer variables. However, in many cases, the huge action space arises because of some \textit{combinatorial structure}, e.g., in security applications, the action space for patrollers is often the set all length $L$ paths in a directed graph. While $n$ is extremely large, we may exploit this combinatorial structure to compute $k$-sparse commitments. We term these generally as \textit{MILP-representable} action spaces,
which have actions given by the nonempty set $\mathcal{K} = \{ z \in \{ 0, 1 \}^l | Fz = a \}$, for constants $F \in \mathbb{R}^{n\times l}, a \in \mathbb{R}^n$. Clearly $\mathcal{K}$ is finite, but extremely large. However, we can rewrite a $k$-mixture of strategies more concisely as:
$\mathcal{S} = \{ x^{(i)} \in \mathbb{R}^m, z^{(i)} \in \mathcal{K}, t \in \Delta^k | x^{(i)} \geq t^{(i)} - (1-z^{(i)}), x^{(i)} \leq t^{(i)}, x^{(i)} \leq z^{(i)}, x^{(i)} \geq 0 \}$.
Essentially, $\mathcal{S}$ ``duplicates'' $\mathcal{K}$ up to $k$ times, where $x^{(i)}$ is some $z^{(i)}$ ``rescaled'' by the probability $t^{(i)}$ with which it is played. The constraints enforce that $x^{(i)}=t^{(i)} \cdot z^{(i)}$. For a given pure strategy $z \in\mathcal{K}$, we define the payoff function for all opponent strategies $b \in [m]$: 
\begin{align*}
    A_{z, b} = \max \{\mathcal{C}_{z, b}\}, \text{where }\mathcal{C}_{z, b} = \{ c \in \mathbb{R} | C_b z + d_b \geq 1 c\}.
\end{align*}
Here, $C_b \in \mathbb{R}^{r \times m}$, $d_b \in \mathbb{R}^r$ are game constants for every action $b \in [m]$.
Rescaling $\mathcal{C}_{z, b}$ to account for $t^{(i)}$ yields: 
\begin{align*}
    &\max_{\substack{g\in\real,(x,z,t)\in\mathcal{S}, c \in \mathbb{R}^{k \times m}}}~g \\  
    &g \leq \sum_{i \in [k]} c(i, b),  
    C_bx^{(i)} + d_b \cdot t^{(i)} \geq 1 c(i, b) 
    \quad \forall b \in [m].
\end{align*}
While somewhat complicated, allowing the cost function to be \textit{non-linear} function of $z$ admits a much wider class of problems, including games involving security interdiction \cite{vcerny2024layered,zhang2017optimal}. As a concrete example, we return to the patrolling example where Player 1's action is to choose a length $L$ path. Suppose Player 2's action is to select a subset $b$ of exactly $w$ distinct edges to hide evidence of criminal activity, where $w$ is a game constant. Player 1 obtains a payoff of $1$ if and only if its chosen path intersects at least $1$ of Player 2's chosen edges; no double counting is allowed. Then $z^{(i)}$ is a binary vector of length $\ell$ (the number of graph edges) such that $z^{(i)}(e) = 1$, $z^{(i)}(e)=t^{(i)}$ for all edges $e$ included in path $i$. For a path $z$, $A_{z,b}$ is \textit{not} a linear function of $z$, since path $z$ can intersect the set of edges chosen in $b$ \textit{multiple times}. Yet, by using the above formulation we can set $r=2$, and define the first row of $C_b$ to contain $1$ on every edge included in action $b$, $d_b(1) = 0$, second row of $C_b=0$, and $d_b(2)=1$. Therefore the constraint forces $\mathcal{A}_{z,b}$ to be upper bounded by $1$ and the number of times path $z$ intersects the edges in $b$, avoiding the issue of double counting.
\paragraph{(iii) Combined methods when $n$ and $m$ are large}

The case when both $n$ and $m$ are large can be handled by combining methods (i) and (ii). Naturally, we would require Player 2 to be able to compute best responses to any mixed strategy ``efficiently'' for any Player 1 strategy with constant-sized support.  This approach also requires solving the $k$-sparse MILP from Section (ii) ``somewhat-efficiently.'' 

\subsection{Extensions to gen-sum Stackelberg equilibria}
\label{subsec:finding:sse}

The \textit{multiple LPs} method~\cite{conitzer2006computing} is a standard approach for finding optimal commitments. Combined with our method, it yields $k$-sparse general-sum Stackelberg equilibria by solving \textit{multiple MILPs}, one for each action $b\in[m]$ of Player 2. Specifically, we solve the following MILP for each action:
\begin{align*}
\label{multiple_milp}
\max_{x\in\Delta_k^n} \tag{G}
&\sum_{a\in[n] }A(a, b)x(a) \\
    0 &\geq\smashoperator{\sum_{a\in[n]}}\left(B(a, b')-B(a, b)\right)x(a) &&\forall b' \in [m]. 
\end{align*}
The constraint guarantees $x$ is restricted 
such that $b$ is a best response to $x$. Note that for some choices of $b$ this MILP could be infeasible, though at least one is feasible. At the end, we select the maximum value achieved over all of the MILPs that are feasible. We describe practical speedups and an alternative single-MILP method in the appendix.

\subsection{Extensions to structured sparsity constraints}

For some security applications, we may want Player 1's sparsity to be imposed in a more structured manner. 
For example, suppose we are planning patrols in a graph, such that actions are paths of a given length $L$. Suppose further that we are also required to decide the locations of $k_1$ and $k_2$ command posts at which paths are allowed to begin and end. Thus, we do not require sparsity over the path distribution, but rather sparsity in the \textit{number of starting and ending vertices} over length $L$ paths chosen with non-zero probability. Consequently, utilizing $\Delta^n_k$ is overly restrictive, as there may be paths sharing a starting or ending (or both) vertex.

More generally, we will define $R$ \textit{constraint sets} $\mathcal{S} = \{S^1,\dots,S^R\}$. For each constraint set indexed by $i \in [R]$, we have \textit{action sets} $S^i = \{ S^i_{1}, \dots, S^i_{q_i} \}$, where each $S^i_{j} \subseteq \mathcal{A}$. Each constraint set has a specified sparsity constraint $k_i$. In the above patrolling example, $R=2$, one for starting points, and another for ending points. Then $q_1$ (resp. $q_2$) is equal to the number of vertices, and $S^1_{j}$ (resp. $S^2_{j}$) contains all length $L$ paths starting from (resp. ending at) vertex $j$. 

Note that the sets in $S^i$ do not necessarily form a partition of $\mathcal{A}$; for a given constraint set $i$, an action \( a \in \mathcal{A} \) can either be in one, multiple, or no action sets. 
For every action set, we incur a sparsity cost as long as at least one action it contains is played with positive probability. This yields the MILP:

\begin{align*} \label{mip:structured_space}
1 &= \sum_{a\in [n]}x(a), \quad && x(a) \in [0,1]  \quad\forall a\in [n] \\
k_{i} &\geq \sum_{S^i_j \in S^i} z(S^i_j) &&\forall i \in [R] \tag{S}
 \\  
z(S^i_j)&\geq x(a) \quad &&\forall a \in  S^i_j, \: S^i_j  \in S^i, \: i \in [R] \\         
 z(S^i_j)&\in\{0,1\} && \forall S^i_j  \in S^i, \: i \in [R]. 
 \end{align*}

\section{Empirical Evaluation}\label{sec:exp}

\begin{figure*}[ht]
    \centering
    \begin{subfigure}[t]{0.245\linewidth}
    \centering
    \includegraphics[width=\textwidth]
    {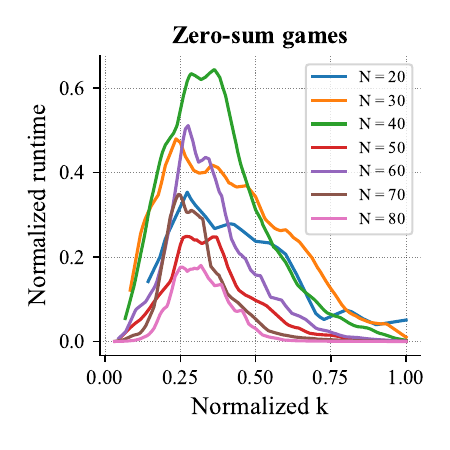}
\end{subfigure}
\hfill
\begin{subfigure}[t]{0.245\linewidth}
    \centering
    \includegraphics[width=\textwidth]{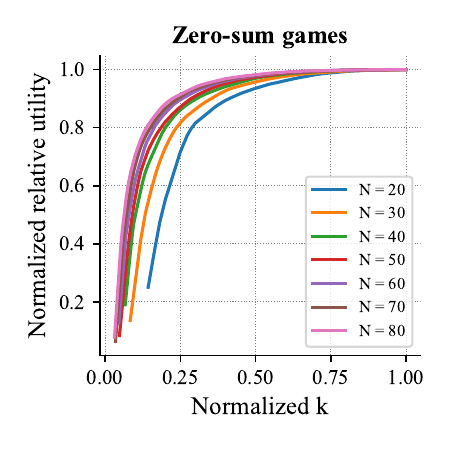}
\end{subfigure}
%     \caption{Randomly generated zero-sum games}
%     \label{fig:rgg:zerosum}
% \end{figure*}
% \begin{figure*}[t]
%     \centering
    \begin{subfigure}[t]{0.245\linewidth}
    \centering
    \includegraphics[width=\textwidth]{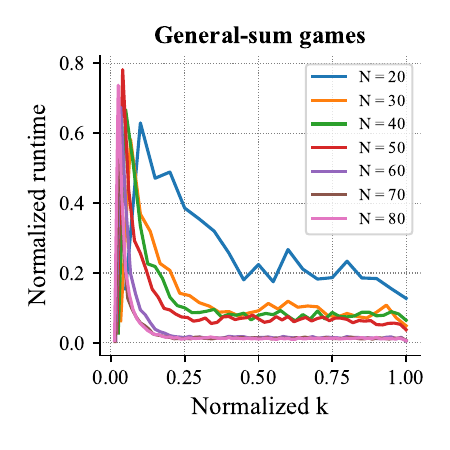}
\end{subfigure}
\hfill
\begin{subfigure}[t]{0.245\linewidth}
    \centering
    \includegraphics[width=\textwidth]{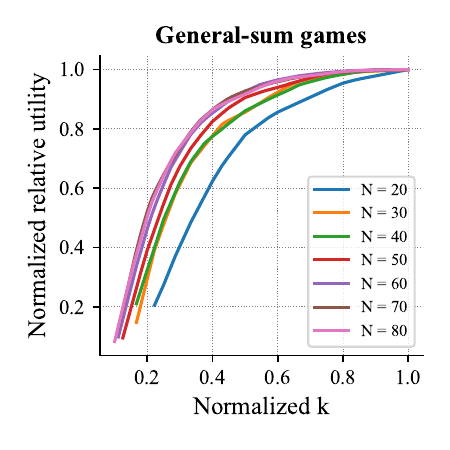}
\end{subfigure}
    \caption{Average normalized runtime and relative utility for solving random zero-sum (\textbf{left}) and general-sum (\textbf{right}) games.}
    \label{fig:rgg:gensum}
\end{figure*}

\begin{figure*}[t]
    \centering
    \begin{subfigure}[t]{0.246\linewidth}
    \centering
    \includegraphics[width=\textwidth]
    {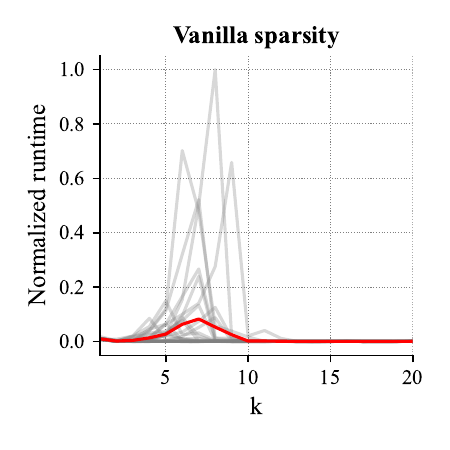}
\end{subfigure}
\hfill
\begin{subfigure}[t]{0.246\linewidth}
    \centering
    \includegraphics[width=\textwidth]{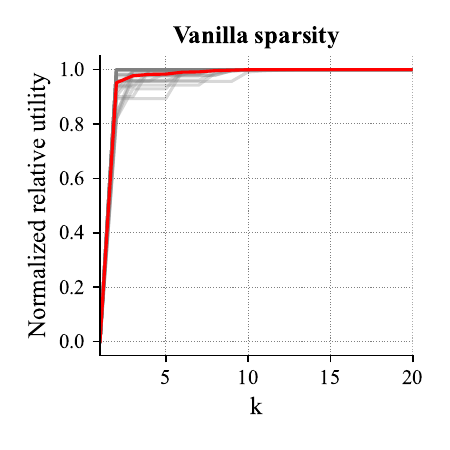}
\end{subfigure}
%     \caption{Randomly generated zero-sum games}
%     \label{fig:rgg:zerosum}
% \end{figure*}
% \begin{figure*}[t]
%     \centering
\begin{subfigure}[t]{0.246\linewidth}
    \centering
    \includegraphics[width=\textwidth]{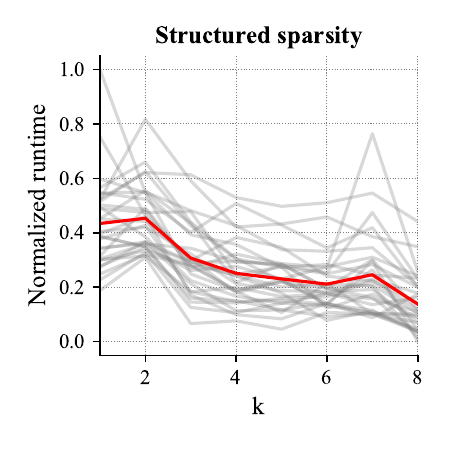}
\end{subfigure}
\hfill
\begin{subfigure}[t]{0.246\linewidth}
    \centering
    \includegraphics[width=\textwidth]{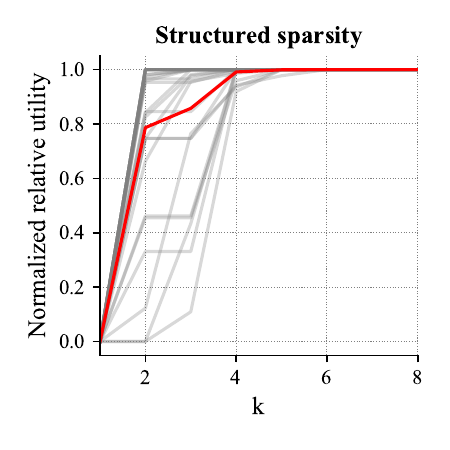}
\end{subfigure}
    \caption{Average (red) normalized runtime and relative utility over 30 instances (gray) of a PE on a university campus. We impose \textit{vanilla sparsity} on path distribution (\textbf{left}) and \textit{structured sparsity} on the starting point distribution of Player 1 (\textbf{right}).}
    \label{fig:lgg:vanilla_structured}
\end{figure*}

We empirically analyze the performance and scalability of our  proposed methods, focusing on (i) wall-clock computational time, (ii) the utility that Player 1 achieves for different support sizes $k$, and (iii) comparisons to $k$-uniform strategies.
For purposes of comparison, we normalize running times using $t_{\text{normalized}} = \frac{t-t_{\min}}{t_{\max}-t_{\min}}$, where $t_{\min}$ and $t_{\max}$ are the minimum and maximum runtimes respectively. We normalize the utility value by the Nash value for zero-sum games and the Stackelberg equilibrium (SE) for general-sum games. Specifically, $u_{\text{normalized}} = \frac{u-u_{\min}}{u_{\text{equi}}-u_{\min}}$ where $u_{\min}$ is the utility at $k=1$ and $u_{\text{equi}}$ is the true value, i.e. $k=\infty$. We use Gurobi \cite{gurobi} for all MILP solvers.
Unless otherwise stated, for experiments with randomness, 30 instances were generated and solved. We report the standard errors in plots (usually negligible). Other experimental details are deferred to the Appendix.
\subsection{Fully enumerable strategy spaces}
We begin by analyzing enforced sparsity on games where $n$ and $m$ are small, under two different experiment setups. 
\paragraph{Randomly generated normal-form games} 
We consider both zero-sum and general-sum games and solve them using the proposed MIP~\eqref{mip:vanilla-zerosum} and multiple MILP~\eqref{multiple_milp} respectively. 
We generate payoff matrices $A$, $B$ with sizes $n=m \in \{ 20,30,\dots,80\}$. For zero-sum games, each $A_{ij}$ is uniformly chosen in $[10,100]$.
In general-sum games, $A_{ij}$ and $B_{ij}$ are randomly chosen in $[-50, 50]$, under the condition that $A_{ij}$ and $B_{ij}$ have opposite signs.

Figure~\ref{fig:rgg:gensum} shows the normalized runtime and relative utility as a function of $k$, each averaged over all instances per game size. For visualization purposes, we normalized the values of $k$ with respect to the support size of NE (resp. SE) for zero-sum (resp. general-sum) games, denoted by $k_{NE}$ (resp. $k_{SE}$). Thus, the $x$-axis represents the enforced support size w.r.t. the optimal strategy. 
Interestingly, the runtime curves exhibit a notable phase transition, i.e., the difficulty is concentrated in a specific range of $k$ values. For example, for zero-sum games we find a ``hard zone'' when $k$ lies between $0.2 k_{NE}$ and $0.8 k_{NE}$.  
For general-sum games, this hard zone was observed for values of $k\leq 0.2 k_{SE}$. This pattern is reminiscent of the behavior observed when solving random SAT instances~\citep{gent1994sat}. 
Furthermore, the \textbf{utility curves} demonstrate that as the game size increases, we are still able to capture nearly $90\%$ of the unrestricted optimal commitment, even with a fraction of the support size. 
When $n=m=80$, about $90\%$ of the game value is attained at around $0.25 k_{NE}$ and $0.45 k_{SE}$. 

\paragraph{Patrolling games} We evaluate our MIP~\eqref{mip:vanilla-zerosum} on a security motivated example using the layered graph framework of \citet{vcerny2024layered}. We design a pursuit evasion (PE) game on a university campus: a professor is trying to apprehend a student who stole exam questions. 
We constructed the game using a real university campus path network. Each player starts at one of several potential initial locations on campus corresponding to vertices, and selects a path of length $T=5$ to traverse (also known as \textit{depth}); the student is apprehended if and only if the chosen paths share an edge. 
We call a set of fixed starting points and desirable final locations a \textit{scenario}. Individual games are then generated by assigning random (uniform from $[6,10]$ for desirable and $[1,5]$ for all others) rewards for escaped students based on their final location.

Figure~\ref{fig:lgg:vanilla_structured} shows results across 30 instances when we impose sparsity on the distribution over the professor's paths.  
Just like random games, we again observe a phase transition pattern, with a pronounced ``hard zone'' when $k$ ranges from 5 to 10. Furthermore, the professor is able to achieve more than $90\%$ of the optimal even for $k$ as low as 3. 

\paragraph{Air defense battery placements} Our task is to select 
which and where to place air defenses to maximize defensive coverage. Player 1's actions involve choosing (a) which, out of $4$ different types of air defense batteries; those which provide stronger coverage have lower range and vice versa, and (b) where to place it's chosen battery. These locations are based off maps from the real world. 
Player $2$ chooses a location to attack; the probability of success depends on how the defensive converges there. $k$-sparsity is sensible since batteries and ammunition dumps are expensive to procure and construct.
Due to space constraints, we report detailed results in the Appendix.
Nonetheless, our main observations are similar: while natural sparsity is quite low (e.g., $<20$), we can still do well even with a small $k$ (around $5$). We also remark that the pathological behavior in Propositions~\ref{thm:disjoint_support} and \ref{thm:nonconvex_utility} are seen in practice in these experiments.
 
\begin{figure}[t]
    \centering
   \includegraphics[width=1\linewidth]{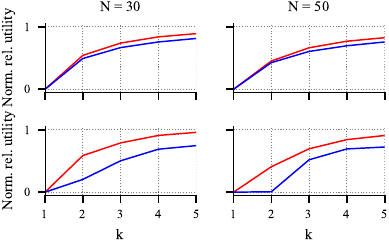}
\caption{Average normalized relative utility for solving randomly generated zero-sum and general-sum games using optimal $k$-sparse (red) and $k$-uniform (blue) commitments. \textbf{Top:} zero-sum games. \textbf{Bottom:} general-sum games.}

\label{fig:rdm_comp}
\end{figure}

\subsection{Fully enumerable structured strategy spaces} 

We evaluate MILP~\eqref{mip:structured_space} on experiments requiring structured sparsity, using the patrolling game (with $T=5$) as the basis for two tests. In the first experiment, the professor chooses paths from $8$ starting points but is limited to using $k$ starting points, potentially with multiple paths per starting point. The results are reported in Figure~\ref{fig:lgg:vanilla_structured}. Interestingly, we no longer observe a clear phase transition. Again, we obtain almost maximum utility at just $k=4$, i.e., around half of the possible starting points. Secondly, we experimented on a more complex setting with sparsity constraints $k_1$ and $k_2$ on starting points and paths respectively. The runtime heat map in Figure~\ref{fig:heatmaps} reveals a ``2 dimensional hard zone'', with the hardest instances occurring at $k_1$ between $[4,7]$ and $k_2$ in $[4,6]$.

\begin{figure}[t] 
   \includegraphics[width=0.23\textwidth]{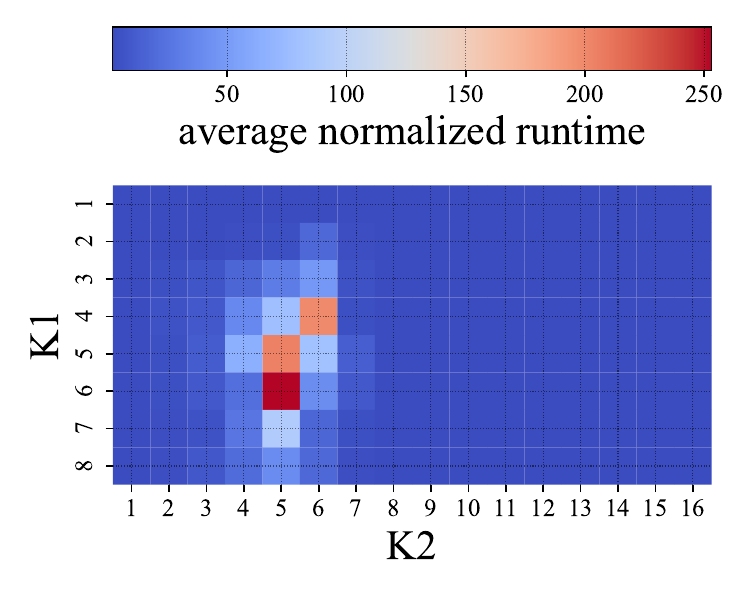}
   \includegraphics[width=0.23\textwidth]{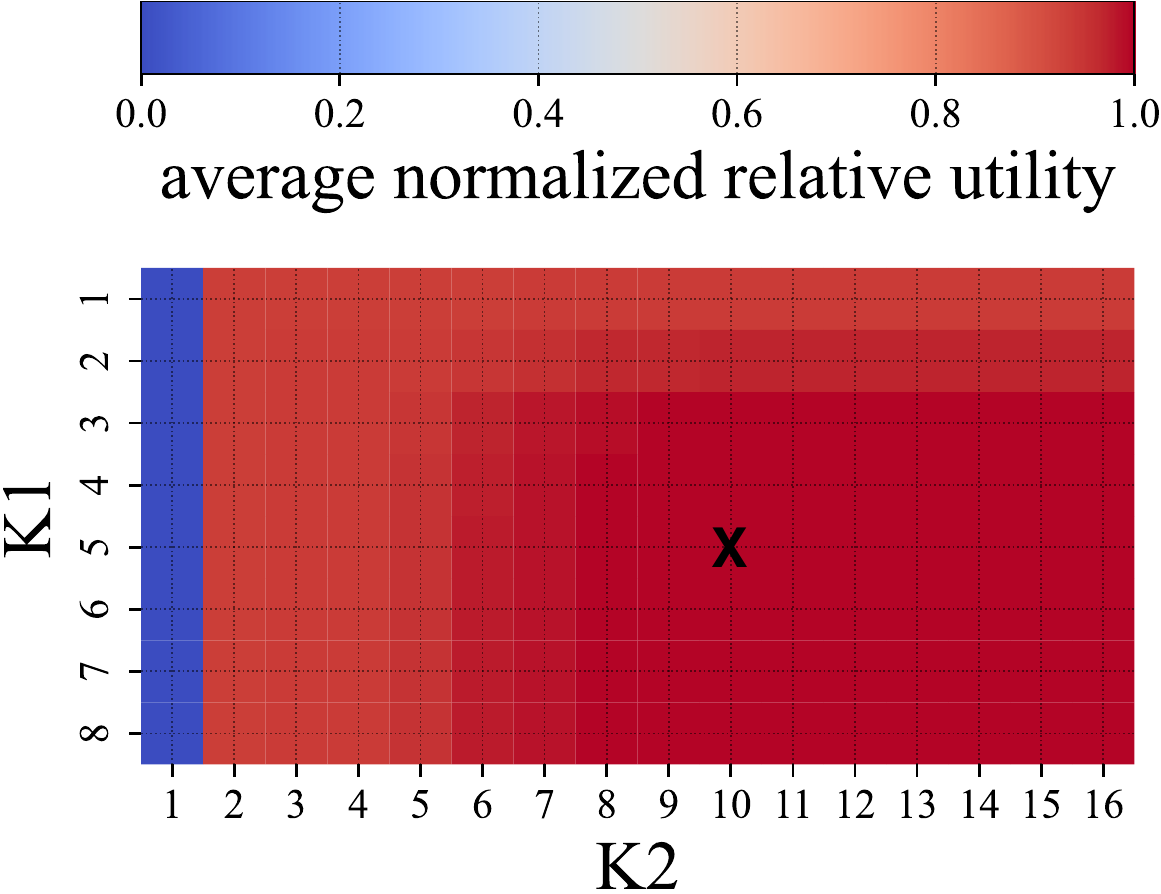}
\caption{Average normalized runtime (left) and relative utility (right) for solving a pursuit-evasion game on a university campus, imposing \textit{structured sparsity} on the distribution over starting points ($k_1$) and paths ($k_2$) of Player 1. The smallest values of $k_1$ and $k_2$ at which the Nash value is attained is marked by ``$\times$''.}
\label{fig:heatmaps}
\end{figure}

\subsection{Scaling up to large strategy spaces}
 
\begin{figure*}[ht]
    \centering
   \includegraphics[width=1\linewidth]{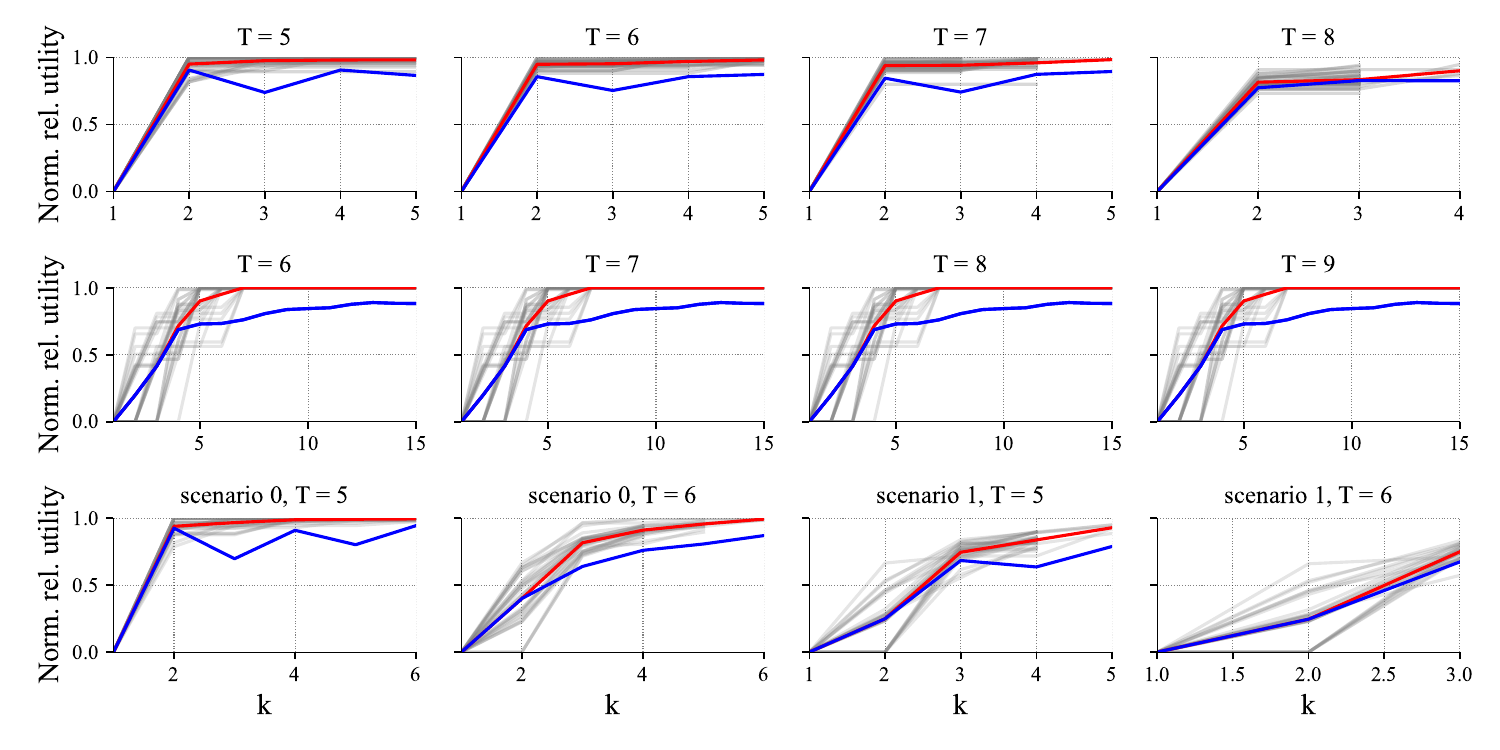}
\caption{Average normalized relative utility curves for optimal $k$-sparse (red) and $k$-uniform (blue) commitments. Results for individual instances using optimal $k$-sparse commitments are in gray. \textbf{Top row:} sparse player has a large strategy space. \textbf{Middle row:} non-sparse player has a large strategy space. \textbf{Bottom row:} both players have large strategy spaces.}
\label{fig:large_spaces}
\end{figure*}

Next, we examine settings where the basic MIP approach fails due to the large strategy space of one or both players. Using patrolling game variants, we test three setups: (i) large $n$, small $m$; (ii) small $n$, large $m$; and (iii) large $n$ and $m$. For each setup, we run 30 instances and report average relative utility results for instances solved within a 12-hour limit.
\paragraph{(i) Sparse player has a large action space} 
Here, only Player 1 has a large action space, which we increase by varying $T$ from 5 to 8 (note that $n$ grows exponentially with the path length $T$). We observe in  Figure~\ref{fig:large_spaces} (top) that for $T\in \{5,6,7\}$, Player 1 achieves more than $90\%$ of the optimal game value for $k$ as low as 3. Even at $T=8$, we obtain a respectable $80\%$ of the true value when $k=3$.
\paragraph{(ii) Non-sparse player has a large action space} In this setting, Player 2 has a large action space whereas Player 1 places only a single patrol on a single intersection in the graph, which then stays there. We vary the depth of the graph from 6 to 9. The relative utility plots reported in Figure~\ref{fig:large_spaces} (middle) show that Player 1 captures around $90\%$ of the true Nash value at $k=3$ for all depths. Notice however that the plots are almost identical. The reason behind this is that increasing depth does seem to provide Player 2 with no strategic advantage, Player 1 hence positions the patrol on the same intersections as with smaller $T$. 

\subsubsection{(iii) Both players have large action spaces}
Finally, we examine the case where both Player 1 and 2 have large action spaces. We study two scenarios differing in the locations of the players' starting points and desirable exits. We present in Figure~\ref{fig:large_spaces} the plots for $T=\{5,6\}$ across the two scenarios. In scenario 0, Player 1 gains high utility value at small $k$, notably for depth 5. Solving for the optimal $k$-sparse strategy at depth 6 in scenario 1 proves to be challenging. While some instances achieve approximately $80\%$ of the Nash value, a support size of 3 remains insufficient for attaining higher utility. We point out that here, there are $\sim$ 16k paths per player, accounting for the relatively poor performance.

\subsection{Comparison with $k$-uniform strategies} 
We compare the average relative utility achieved by Player 1 when using the optimal $k$-sparse commitment against the $k$-uniform mixed strategy proposed by \citet{mccarthy2018price}. Again, we consider random normal-form games and patrolling games with large action spaces. For the former, we generate zero-sum and general-sum games as described earlier, considering games where $n = m \in {30,50}$. We limit the $x$-axis to $k = 5$ to emphasize small support sizes, which are of primary interest. As shown in Figure~\ref{fig:rdm_comp}, the results indicate that $k$-sparse commitments generally outperform $k$-uniform strategies, especially when $k$ is small. This difference is more pronounced in general-sum games.

Figure~\ref{fig:large_spaces} depicts the results in patrolling games, where the blue curves represent the average relative utility normalized by the Nash value when the sparse player employs $k$-uniform strategies. Consistent with the findings from random games, we observe that optimal $k$-sparse strategies provide higher utility for the sparse player, particularly when the support size is small. For instance, for $T={5,6,7}$, the sparse player secures over 95\% of the Nash value with $k=3$ when both players have large strategy spaces, whereas a uniform strategy yields less than 80\%.

\section{Conclusion}

This paper studies properties and algorithms for finding $k$-sparse commitments for two player games. Our proposed method performs competitively with non-sparse commitments even for very small values of $k$. Furthermore, we have proposed extensions to handle Stackelberg commitments in general-sum games, structured sparsity, as well as cases with a large number of player actions. Our notion of $k$-sparse commitments often leads to significantly better performance when compared to $k$-uniform strategies. Future work include extensions to extensive-form or multiplayer games, exploring approximate sparsity using sparsity-inducing regularization techniques, or online learning of sparse strategies.

% \begin{links}
%     \link{Code}{https://aaai.org/example/code}
%     % \link{Datasets}{https://aaai.org/example/datasets}
%     \link{Extended version}{https://aaai.org/example/extended-version}
% \end{links}

\section*{Acknowledgments}
This research was supported by the Office of Naval Research awards N00014-22-1-2530 and N00014-23-1-2374, and the National Science Foundation awards IIS-2147361 and IIS-2238960. Part of the work was done while CKL was a postdoctoral scientist at Columbia University.

\bibliography{ref}
\newpage
\onecolumn
\section{Appendix}
\subsection{Remarks on Sparse Approximations, Regularization, and Online Learning}

While running our algorithms to full completion yields an exact sparse solution, doing so can be prohibitively time-consuming. In practice, the algorithms may be terminated earlier by, for example, using a less stringent termination criterion in the MILP solver or selecting a larger equilibrium gap, $\nabla$. The resulting strategy is an approximate solution in that the $k$-sparse commitment is strictly enforced, but the strategy may not be entirely optimal. This approach aligns better with the practical applications we envision. Alternatively, we can consider a broader definition of approximate sparse solutions, where the sparse player's strategy is nearly sparse -- specifically, no more than $k$ actions have a probability of at least $\epsilon$. Such strategies could potentially be identified, e.g.,  using sparsity-inducing regularization. However, applying nonconvex regularization alone does not ensure a $k$-sparse equilibrium as per our definition; achieving this may require an additional truncation or projection step, which introduces further approximation error.

Another interesting use case for finding or implementing sparse strategies is in online learning. Defining sparsity in an online setting, however, presents additional challenges. Perhaps the most natural extension is to require that the online learning algorithm always selects an iterate from the $k$-sparse simplex at every iteration. However, it is possible, in principle, play a pure strategy at every iteration, for example via Follow The Perturbed Leader (FTPL).

\subsection{Double Oracle Algorithm}

In this section we present a pseudo code of a standard incremental strategy generation method described in Section~\ref{sec:theory}.
\def\blueplan{x}
\def\redplan{y}
\begin{algorithm}[h]
\caption{Double Oracle Algorithm}\label{alg:do}
\begin{algorithmic}[1]
\STATE $\mathcal{N}, \mathcal{M} \gets \textsc{InitialSubgame}([n], [m])$ \label{algline:init_subgame}
\REPEAT
\STATE $\widetilde{x}^*, \widetilde{y}^* \gets \textsc{NashEquilibrium}(\mathcal{N}, \mathcal{M})$ \label{algline:solve_subgame}
\STATE $\blueplan^\text{BR}, \redplan^\text{BR} \gets \textsc{BR}_1(\widetilde{y}^*), \textsc{BR}_2(\widetilde{x}^*)$ \label{algline:best-response-oracle}
\STATE $\mathcal{N}, \mathcal{M} \gets \mathcal{N}\cup \{\blueplan^\text{BR}\}, \mathcal{M}\cup \{\redplan^\text{BR}\}$ \label{algline:expand-br}
\UNTIL{$\textsc{EquilibriumGap}(\widetilde{x}^*, \widetilde{y}^*, \blueplan^\text{BR}, \redplan^\text{BR}) < \epsilon$}
\end{algorithmic}
\end{algorithm}

\subsection{Single General-Sum Stackelberg Equilibrium MILP} 
\label{appendix:single_milp}
Here we formulate a single MILP that condenses the entire multiple LP approach described in Section~\ref{subsec:finding:sse} in one mathematical program. We note that in practice, the multiple LP method is often faster. This MILP has $n+m$ binary variables, $n\times m + n$ real variables, and $2n+3m$ constraints.

\begin{align*}
\max &\sum_{a\in[n]}\sum_{b\in[m]}A(a, b)r(a,b) \\
    0 &=s_{b'} + \sum_{a\in[n]}B(a, b')x(a) - \sum_{a\in[n]}\sum_{b\in[m]}B(a, b)r(a,b) &&\forall b' \in [m] \\
    y(b) &= \sum_{a\in [n]}r(a,b) && \forall b\in [m] \\
    1 &= \sum_{a\in [n]}x(a) \\
    1 &= \sum_{b\in [m]}y(b) \\
    z(a)&\geq x(a) && \forall a\in [n] \\ 
    S &\geq \sum_{a\in [n]} z(a) \\ 
    0 &\leq r(a,b) \leq x(a) && \forall a\in [n]\\
    % 0 &\leq r(a,b) \leq y(b) && \forall b\in [m]\\
    0 &\leq s_{b} \leq M(1-y(b)) && \forall b\in [m]\\
    0 &\leq x(a) \leq 1 && \forall a\in [n] \\
    0 &\leq r(a,b) \leq 1 && \forall a\in [n],\; \forall b\in [m] \\
    z(a)&\in\{0,1\} && \forall a\in [n] \\
    y(b)&\in\{0,1\} && \forall b\in [m]
\end{align*}

\subsection{Proof of Proposition~\ref{thm:disjoint_support}}
We demonstrate a class of games where the optimal sparse commitments for $k=2,\dots$ have completely disjoint support. Let $N > 2$ be a constant.
    The actions for each player are identical: $\mathcal{A}_1 = \mathcal{A}_2 = \mathcal{A}$, and thus $n=m=|\mathcal{A}_1|=|\mathcal{A}_2|$. 
    The actions are partitioned into $N$ subsets
    \begin{align*}
        \mathcal{A} = \bigcup_{i=2}^{N} \mathcal{A}^{(i)},
    \end{align*}
    where the $i$-th subset contains $i$ distinct elements as follows:
    \begin{align*}
        \mathcal{A}^{(i)} = \{ a_{i,0}, a_{i,1}\dots, a_{i,i-1} \};
    \end{align*}
    where $i\in [2,\dots,N]$ and the $a_{i, \alpha}$'s are distinct actions such that  $|\mathcal{A}^{(i)}| = i $ and $n=m=|\mathcal{A}| = \Theta(N^2)$.

    As we will soon see, we will engineer the payoffs such that if both players are restricted to be playing actions in some fixed $\mathcal{A}^{(i)}$, then we will recover an `$i$-th order matching pennies' game (plus some constant payoff to Player 1 which we will elaborate on later). That is, if Player 2 correctly guesses which of the $i$ actions Player 1 chooses, Player 1 receive a payoff of $-r \cdot (i-1)$, but if they do not, Player 1 receive a payoff of $r$.  Therefore, under this restricted action subgame, $\mathcal{A}^{(i)}$ the optimal strategy is for both players to play uniformly at random. Player 2 then guesses right $1/i$ times and wrong $1-(1/i)$ times, and we recover a fair game: the expected payoff is $ r \cdot \left( -(1-i) \frac{1}{i}  + \frac{i-1}{i}\right) = 0$.

    We now move back to the original game where actions can be played across subsets.
    Our utility function is decomposed into 2 components,
    $$U(a_{i\alpha}, a_{j\beta}) = U_\text{base}(a_{i\alpha}) + U_\text{fight}(a_{i\alpha}, a_{j\beta}),$$
    where $U_\text{base}$ is a positive payoff that Player 1 always gets no matter what the opponent (Player 2) plays, and in fact, only depends on the action set that Player 1's action belongs to. It is given by
    \begin{align*}
    U_\text{base}(a_{i\alpha}) = % 1-\frac{1}{2^{i-1}} < 1
    \frac{2^{i-1}}{2^N} = 2^{i-1-N} < 1,
    \end{align*}
    and $U_\text{fight}$ is an additive component that gives $0$ utility if both players play in the different action sets $x$ and $y$, but an extremely high-stakes (though still fair) high order matching pennies if they play in the same game. Specifically, 
    \begin{align*}
    U_\text{fight}(a_{i\alpha}, a_{j\beta}) = 
    \begin{cases}
        0 \qquad & i \neq j \\
        -r \cdot (\alpha-1) \qquad & i = j \wedge \alpha = \beta \\ 
        r \qquad & i =j \wedge \alpha \neq \beta.
    \end{cases}
    \end{align*}
    We claim that for large enough $r = \Omega(N^2)$, the best $k \leq N$-sparse commitment for Player 1 (as the max player) is to play $a_{k,0}, a_{k,1}, \dots a_{k,k-1}$ uniformly at random, which gives them an expected payoff of $2^{k-1-N}$, where the component from $U_{\text{fight}}$ is equal to zero under the opponent's best response. 

    First, observe that as far as best responses go, Player 2 is indifferent between all of their actions. Obviously, only $U_\text{fight}$ can be influenced by Player 2. Recall that we only need to consider pure strategy best responses. If they play some action in $\mathcal{A}^{(j)}$ where $j\neq k$, then they get a payoff of zero for the $U_\text{fight}$ component. On the other hand, if they play some $a_{k\beta}$ in the $k$-th action set, then the fact that Player 1 is playing uniformly in that action set means that Player 2 will get $0$ regardless of which $a_{k\beta}$ they choose (recall the structure of $k$-th order matching pennies). Thus, once we account for $U_\text{base}$, it is clear that $2^{k-1-N}$ is a lower bound that Player 1 can guarantee among all $k$-sparse commitments for all $k\leq N$. We will prove that Player 1 can do no better; in fact, this is the \textit{unique} $k$-sparse commitment.
    \begin{theorem}
        The unique optimal $k$-sparse commitment for the above game for $2 \leq k \leq N$ is to play actions in $\mathcal{A}^{(k)}$ uniformly at random.
        \label{thm:best-k-sparse-commitment}
    \end{theorem}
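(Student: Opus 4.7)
The plan is to derive an explicit formula for Player 1's payoff after Player 2 best responds, then use a case analysis to show that uniform play on $\mathcal{A}^{(k)}$ is the unique optimizer once $r$ exceeds a threshold comfortably covered by $r=\Omega(N^2)$. First, for $x \in \Delta_k^n$ set $p_j := \sum_\alpha x(a_{j\alpha})$ and $k_j := |\mathrm{supp}(x)\cap \mathcal{A}^{(j)}|$. Player 1's expected payoff when Player 2 plays $a_{j\beta}$ decomposes as $B(x) + r\,p_j - r\,j\,x(a_{j\beta})$, with $B(x) = \sum_i p_i\,2^{i-1-N}$, so Player 2's best response maximizes $j\,x(a_{j\beta}) - p_j$ and
\[ U(x) \;=\; B(x) \;-\; r\,\Pi(x), \qquad \Pi(x) \;:=\; \max_{j,\beta}\bigl(j\,x(a_{j\beta}) - p_j\bigr) \;\geq\; 0. \]
For $x^*$ uniform on $\mathcal{A}^{(k)}$ one has $p_k=1$ and $\max_\beta x^*(a_{k\beta}) = 1/k$, so $\Pi(x^*) = 0$ and $U(x^*) = 2^{k-1-N}$, establishing the claimed value as a lower bound on the optimum.

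Next, for uniqueness I would split on whether $\Pi(x) = 0$. If $\Pi(x)=0$ then every $j\in S := \{j : p_j>0\}$ must satisfy $k_j = j$ with $x$ uniform on $\mathcal{A}^{(j)}$, forcing $\sum_{j\in S}j \leq k$. Since each element of $S$ is at least $2$, either $|S|=1$ (and the base $p_j\cdot 2^{j-1-N}$ is maximized uniquely at $j=k$, recovering $x^*$), or $|S|\geq 2$ with $\max_{j\in S} j \leq k-2$, giving $B(x) \leq 2^{k-3-N} < 2^{k-1-N}$.

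For the positive-penalty case $\Pi(x)>0$, let $J_- = \{j\in S : k_j<j\}$. For each $j\in J_-$, $\Pi(x) \geq p_j(j-k_j)/k_j \geq p_j/k_j$; the averaging inequality $\max_{J_-}(p_j/k_j) \geq P_-/\sum_{J_-} k_j \geq P_-/k$ (with $P_- := \sum_{J_-} p_j$) then yields $\Pi(x) \geq P_-/k$. If $J_-\neq\emptyset$, $\sum_{J_-} k_j \geq 1$ forces $\max_{j\in S\setminus J_-} j \leq k-1$, so $B(x) \leq P_-/2 + (1-P_-)\cdot 2^{k-2-N}$ and
\[ U(x) \;\leq\; 2^{k-2-N} + P_-\bigl(1/2 - 2^{k-2-N} - r/k\bigr), \]
which is strictly less than $2^{k-1-N}$ once $r > k/2$ makes the coefficient of $P_-$ negative. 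If instead $J_- = \emptyset$, the penalty arises from non-uniformity on some $j\in S$ with $k_j=j$; the same support-size analysis as in the zero-penalty case bounds $B(x) \leq 2^{k-1-N}$ with equality only when $S = \{k\}$, in which case the strict positivity of $r\Pi(x)$ still forces $U(x) < 2^{k-1-N}$.

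The main obstacle is ruling out strategies that place a small sliver of mass $p_j$ on a high-index set $\mathcal{A}^{(j)}$ with $j>k$ to inflate the base utility, since a priori $\Pi(x)$ from such a sliver could also be made small by taking $p_j$ small. The averaging inequality is precisely what ties the penalty to the total leakage $P_-$, and the side-effect of committing any support slot to $J_-$---namely $\max_{j\in S\setminus J_-} j \leq k-1$---is what makes the base gain insufficient to offset the penalty once $r$ dominates $k/2$, yielding the required strict inequality.
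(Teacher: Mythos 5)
Your proof is correct, and it takes a genuinely different (and tighter) route than the paper's. The paper organizes the argument around \emph{where} the mass sits: Lemma~\ref{thm:counterexample-toolarge} rules out mass on blocks $\mathcal{A}^{(w)}$ with $w>k$ via a double pigeonhole (first a heavy block, then a heavy action within it) and needs $r>N^2$, while Lemma~\ref{thm:counterexample-toolarge2} rules out mass on blocks $w<k$ by having Player~2 zero out the fight term and comparing base payoffs. You instead write the best-response value in closed form, $U(x)=B(x)-r\,\Pi(x)$ with $\Pi(x)=\max_{j,\beta}\bigl(j\,x(a_{j\beta})-p_j\bigr)\ge 0$ (correctly using the intended matching payoff $-r(j-1)$ from the paper's prose rather than the typo $-r(\alpha-1)$ in its displayed definition), and split on $\Pi(x)=0$ versus $\Pi(x)>0$. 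This buys three things: (i) the zero-penalty case reduces to a purely combinatorial support-counting argument ($\sum_{j\in S}j\le k$ with each $j\ge 2$), which disposes of fully supported low-index blocks without invoking Player~2's response at all; (ii) the mediant inequality $\Pi(x)\ge P_-/k$ treats high-index blocks and under-supported low-index blocks uniformly, replacing the paper's two pigeonhole steps; and (iii) you only need $r>k/2\le N/2$ rather than $r>N^2$, so the construction works for a wider range of $r$. Your uniqueness bookkeeping (the $|S|=1$ versus $|S|\ge 2$ split, and the residual case $J_-=\emptyset$ with $\Pi(x)>0$) is also more explicit than the paper's closing appeal to the matching-pennies structure. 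All steps check out; in a final write-up you should just record explicitly that $k_j\ge 1$ for every $j\in J_-$ (so the divisions by $k_j$ are legitimate) and that $P_->0$ whenever $J_-\neq\emptyset$, both immediate from $J_-\subseteq S$.
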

    Theorem~\ref{thm:best-k-sparse-commitment} obviously implies Proposition~\ref{thm:disjoint_support}. We will prove Theorem~\ref{thm:best-k-sparse-commitment} using 2 Lemmata, which when combined will rule out Player 1 placing any mass on any action set outside of $\mathcal{A}^{(k)}$ in any optimal commitment.
    \begin{lemma}
        Suppose $r$ is sufficiently large. Then for mixed strategies of Player 1 where the total mass $M$ played in actions $\bigcup_{w=k+1}^N \mathcal{A}^{(w)}$ is strictly greater than $0$, there exists a best response giving Player 1 a payoff \textbf{strictly} less than $2^{k-1-N}$.
        \label{thm:counterexample-toolarge}
    \end{lemma}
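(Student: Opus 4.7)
The plan is to exhibit a single pure response for Player~2 that already drives Player~1's payoff strictly below $2^{k-1-N}$; since Player~2's best response is only more damaging, this suffices. I would first decompose Player~1's strategy $x$ by action set: let $M_w = \sum_\alpha p_{w,\alpha}$ be the mass on $\mathcal{A}^{(w)}$, so $\sum_w M_w = 1$ and by hypothesis $M = \sum_{w>k} M_w > 0$. Since $U_\text{base}$ depends only on the action set, the baseline contribution to Player~1's payoff is $\sum_w M_w 2^{w-1-N}$, which I would upper bound by $2^{k-1-N} + M(1/2 - 2^{k-1-N})$ using $2^{w-1-N}\le 2^{k-1-N}$ for $w\le k$ and $2^{w-1-N}\le 1/2$ for $w>k$.

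For the fight term, I would target the action set with the most excess mass. Let $w^\ast = \arg\max_{w>k} M_w$, so $M_{w^\ast}\ge M/(N-k)$. Since $x$ has at most $k$ nonzero entries globally, it has at most $k < w^\ast$ nonzero entries in $\mathcal{A}^{(w^\ast)}$; by pigeonhole some action $a_{w^\ast,\beta^\ast}$ has probability $p_{w^\ast,\beta^\ast}\ge M_{w^\ast}/k$. Using the intended fight payoff of $-r(w-1)$ on a correct guess and $+r$ otherwise (as dictated by the text's matching-pennies interpretation), if Player~2 responds with $a_{w^\ast,\beta^\ast}$ the fight contribution evaluates to $r(M_{w^\ast}-w^\ast p_{w^\ast,\beta^\ast}) \le -rM_{w^\ast}(w^\ast-k)/k \le -rM_{w^\ast}/k \le -rM/[k(N-k)]$, because $w^\ast - k \ge 1$.

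Combining both bounds, Player~1's payoff against this deviation is at most
\[
2^{k-1-N} + M\Bigl(\tfrac{1}{2} - 2^{k-1-N} - \tfrac{r}{k(N-k)}\Bigr),
\]
which, since $M>0$, is strictly below $2^{k-1-N}$ whenever $r > k(N-k)(\tfrac12 - 2^{k-1-N})$. Any $r = \Omega(N^2)$ makes this inequality hold uniformly for all $2\le k\le N$ (the case $k=N$ being vacuous since then $M=0$). The best response, being at least as damaging for Player~1 as the explicit deviation $a_{w^\ast,\beta^\ast}$, inherits the same strict bound, which is exactly what the lemma asks for.

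The main obstacle I anticipate is that both the baseline gain from placing mass in higher action sets and the fight penalty scale linearly in $M$, so the strict inequality cannot come from an absolute payoff gap; it must emerge from comparing the \emph{coefficients} of $M$. This is precisely why $r$ has to scale with $k(N-k)$, i.e.\ with $N^2$: one must ensure that, no matter how small the excess mass $M$ is, the matching-pennies-style penalty the opponent can inflict by playing $a_{w^\ast,\beta^\ast}$ strictly dominates the at-most-$M(1/2 - 2^{k-1-N})$ baseline benefit of drifting into a higher action set. Handling the pigeonhole carefully---using the global support bound of $k$ rather than the subset size $w^\ast$---is the other step where I would double-check the algebra.
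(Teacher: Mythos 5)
Your proof is correct and follows essentially the same route as the paper's: decompose the payoff into base and fight components, apply the pigeonhole principle twice (over the high-index action sets, then within the chosen set using the global $k$-sparsity bound) to find a single action for Player~2 to target, and show that for $r=\Omega(N^2)$ the resulting fight penalty, which scales like $-rM/\mathrm{poly}(N)$, strictly dominates the at-most-$O(M)$ base gain from drifting into higher action sets. Your bookkeeping is in fact slightly tighter than the paper's (pigeonhole constant $M/(N-k)$ versus $M/N$, and an explicit comparison of the coefficients of $M$), but the underlying argument is the same.
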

    \begin{proof}
    We split the final payoff into 4 components. $\hat{u}_{\text{base}}$, $\check{u}_{\text{base}}$, $\bar{u}_{\text{base}}$, and $u_{\text{fight}}$. The first three are for the components of $U_\text{base}$ for actions in $\bigcup_{w=k+1}^N \mathcal{A}^{(w)}$, $\bigcup_{w=1}^{k-1} \mathcal{A}^{(w)}$ and $\mathcal{A}^{(k)}$, respectively. Therefore, for a \textbf{fixed} Player 2 action $a_{j\beta}$, and distribution over actions $x$ for Player 1, we have the expected utility 

    \begin{align}
        u_\text{total} &= 
            \underbrace{\sum_{a \in \bigcup_{w=k+1}^N \mathcal{A}^{(w)}} U_{\text{base}}(a) \cdot x(a)}_{\hat{u}_\text{base}} + 
            \underbrace{\sum_{a \in \bigcup_{w=1}^{k-1} \mathcal{A}^{(w)}} U_{\text{base}}(a) \cdot x(a)}_{\check{u}_\text{base}} +
            \underbrace{\sum_{a \in \mathcal{A}^{(k)}} U_{\text{base}}(a) \cdot x(a)}_{{\bar{u}_\text{base}}}+
            \underbrace{\sum_{a \in \mathcal{A}} U_{\text{fight}}(a, a_{j\beta}) \cdot x(a)}_{{u}_\text{fight}}.
    \end{align}
    
    We first show that $\hat{u}_\text{base} + u_\text{fight} < 0$ under the best response (or for that matter, some action that we will specify in a moment). Since Player 1 places a mass $M > 0$ in actions $\bigcup_{w=k+1}^N \mathcal{A}^{(w)}$, by the pigeonhole principle, at least one $\mathcal{A}^{(w^*)}$ has mass greater or equal to $M/n$. Fix that $w^*$. Also, because Player 1 is playing a $k$-sparse strategy, at most $k$ of these actions in $\mathcal{A}^{(w^*)}$ are played with non-zero probability. Thus, again by the pigeonhole principle, at least one action is played with at least $\frac{M}{n} \cdot \frac{1}{k}$ probability. A best-responding opponent can play that action all the time. So, the probability of losing is at least $\frac{M}{n} \cdot \frac{1}{k}$. This gives a payoff of 
    \begin{align} 
    \frac{M}{N} \cdot \left( \underbrace{\frac{1}{k} \cdot -r \cdot (w^*-1)}_{\text{payoff from losing}  \leq -r}
    + \underbrace{(1-\frac{1}{k})\cdot r}_{\text{payoff from winning} \leq r \cdot (1-\frac{1}{N})} \right) \leq -Mr/N^2,
    \label{eq:fat-enough}
    \end{align}
    where the first inequality is because $k \leq w^*-1$. Recall that $U_{\text{base}} < 1$. Hence, the amount obtained from $U_\text{base}$ from playing in $\bigcup_{w=k+1}^N \mathcal{A}^{(w)}$ is no more than $M$ in total. Noting that so if we select $r$ large enough such that $Mr/N^2 > M$, the total utility from this strategy is strictly $<0$ and we are done. This can be done by selecting $r>N^2$.    

    Now, \textit{under the same choice of best response}, we will show that $\check{u}_\text{base} + \bar{u}_\text{base} < 2^{k-1-N}$. We have
    \begin{align*}
    \underbrace{\sum_{a \in \bigcup_{w=1}^{k-1} \mathcal{A}^{(w)}} U_{\text{base}}(a) \cdot x(a)}_{\check{u}_\text{base}} +
    \underbrace{\sum_{a \in \mathcal{A}^{(k)}} U_{\text{base}}(a) \cdot x(a)}_{{\bar{u}_\text{base}}} \leq 
    (1-M) \cdot \left(2^{k-1-N} \right) < 2^{k-1-N}.
    \end{align*}
    Adding all 4 terms up, we get that $u_\text{total} < 2^{k-1-N} + \text{(a strictly negative term)} < 2^{k-1-N}$ as desired.
    \end{proof}
        
    \begin{lemma}   
    If Player 1 plays a mixed strategy $x$ where the total mass $M'$ played in actions $\bigcup_{w=1}^{k-1} \mathcal{A}^{(w)}$ is strictly greater than $0$, then under any Player 2's best response Player 1 will obtain \textbf{strictly} less than $2^{k-1-N}$.
    \label{thm:counterexample-toolarge2}
    \end{lemma}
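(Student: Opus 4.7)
The plan is to mirror the decomposition used for Lemma~\ref{thm:counterexample-toolarge} and exploit the fact that base payoffs $U_{\text{base}}(a_{w,\alpha}) = 2^{w-1-N}$ are strictly smaller in the lower sets $w < k$ than in $\mathcal{A}^{(k)}$. First, I invoke Lemma~\ref{thm:counterexample-toolarge} to reduce to the case $M = 0$: if the mass $M$ in $\bigcup_{w=k+1}^{N} \mathcal{A}^{(w)}$ is positive, that lemma already exhibits a Player 2 action driving Player 1's utility strictly below $2^{k-1-N}$, and hence so does any best response. Henceforth I may assume all of $x$'s mass is split between $\bigcup_{w=1}^{k-1}\mathcal{A}^{(w)}$ (with total $M' > 0$) and $\mathcal{A}^{(k)}$ (with total $1 - M' \geq 0$).

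The second step bounds the base term alone. Since $U_{\text{base}}$ is maximized over the lower sets at $\mathcal{A}^{(k-1)}$ with value $2^{k-2-N} = \frac{1}{2} \cdot 2^{k-1-N}$, linearity of expectation yields
\[
\sum_{a \in \mathcal{A}} U_{\text{base}}(a)\, x(a) \;\leq\; M' \cdot 2^{k-2-N} + (1-M') \cdot 2^{k-1-N} \;=\; 2^{k-1-N}\bigl(1 - M'/2\bigr),
\]
which is strictly below $2^{k-1-N}$ whenever $M' > 0$. Crucially, and unlike in the previous lemma, no lower bound on $r$ is needed here.

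The third step verifies that Player 2's best response contributes $U_{\text{fight}} \leq 0$, so the strict base inequality carries through to the total. Since the base term is independent of Player 2's action, the best response equivalently minimizes $U_{\text{fight}}$, and it suffices to exhibit a single response with $U_{\text{fight}} \leq 0$. Pick any $i$ for which Player 1 places positive mass $M_i$ across $s_i \geq 1$ distinct actions of $\mathcal{A}^{(i)}$; because $|\mathcal{A}^{(i)}| = i$, we have $s_i \leq i$. Let $\alpha^*$ maximize $p_{\alpha}$ over $\mathcal{A}^{(i)}$, so $p_{\alpha^*} \geq M_i / s_i$ by pigeonhole; then using that all actions outside $\mathcal{A}^{(i)}$ contribute $0$ to $U_{\text{fight}}$ against $a_{i,\alpha^*}$,
\[
U_{\text{fight}}(x, a_{i,\alpha^*}) \;=\; -r(i-1)\,p_{\alpha^*} + r\bigl(M_i - p_{\alpha^*}\bigr) \;=\; r M_i - r\,i\, p_{\alpha^*} \;\leq\; r M_i\bigl(1 - i/s_i\bigr) \;\leq\; 0.
\]
The corner case $M' = 1$, in which $\mathcal{A}^{(k)}$ receives no mass, is handled separately by letting Player 2 play any action of $\mathcal{A}^{(k)}$ for exactly $U_{\text{fight}} = 0$.

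Combining the two bounds, Player 1's total utility under any best response is at most $2^{k-1-N}(1 - M'/2) + 0 < 2^{k-1-N}$, giving the claim. The main subtlety to track is the case split between $M' = 1$ (where only the base bound is needed because $U_{\text{fight}}$ can be pinned to $0$) and $0 < M' < 1$ (where both bounds contribute simultaneously), but I do not anticipate any serious obstacle beyond bookkeeping; the argument is strictly structural and requires no parameter tuning on $r$.
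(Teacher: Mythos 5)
Your proof is correct and follows essentially the same route as the paper's: reduce to zero mass above level $k$ via the preceding lemma, bound the base term by $M'\cdot 2^{k-2-N}+(1-M')\cdot 2^{k-1-N}<2^{k-1-N}$, and exhibit a Player 2 action pinning the fight term to at most $0$. Your only addition is spelling out the pigeonhole calculation showing $U_{\text{fight}}\leq 0$ under the best response, which the paper merely asserts.
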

    \begin{proof}
    Let $a_{i^*\alpha^*}$ be the action in $\bigcup_{w=1}^{k-1} \mathcal{A}^{(w)}$ played with the highest probability. By our assumption, $x(a_{x^*i^*}) > 0$. 

    As in our proof of Lemma~\ref{thm:counterexample-toolarge}, we decompose the utility for any Player 2 action $a_{j\beta}$ as
    \begin{align*}
        u_\text{total} &= 
            \underbrace{\sum_{a \in \bigcup_{w=k+1}^N \mathcal{A}^{(w)}} U_{\text{base}}(a) \cdot x(a)}_{\hat{u}_\text{base}} + 
            \underbrace{\sum_{a \in \bigcup_{w=1}^{k-1} \mathcal{A}^{(w)}} U_{\text{base}}(a) \cdot x(a)}_{\check{u}_\text{base}} +
            \underbrace{\sum_{a \in \mathcal{A}^{(k)}} U_{\text{base}}(a) \cdot x(a)}_{{\bar{u}_\text{base}}}+
            \underbrace{\sum_{a \in \mathcal{A}} U_{\text{fight}}(a, a_{j\beta}) \cdot x(a)}_{{u}_\text{fight}}.
    \end{align*}
    
    We observe that Player 2 can at least play $a_{i^*\alpha^*}$ to force a utility of at most $0$ for the $u_\text{fight}$. 
    Also, from Lemma~\ref{thm:counterexample-toolarge} we may assume that actions in $\bigcup_{w=k+1}^N \mathcal{A}^{(w)}$ are played with probability $0$ each. So, $\hat{u}_\text{base} = 0$. Finally, since $M'=\sum_{a \in \bigcup_{w=1}^{k-1} \mathcal{A}^{(w)}} > 0$, we have 
    \begin{align*}
    \check{u}_{\text{base}} + \bar{u}_\text{base} &= 
    \underbrace{\sum_{a \in \bigcup_{w=1}^{k-1} \mathcal{A}^{(w)}} U_{\text{base}}(a) \cdot x(a)}_{\check{u}_\text{base}} +
            \underbrace{\sum_{a \in \mathcal{A}^{(k)}} U_{\text{base}}(a) \cdot x(a)}_{{\bar{u}_\text{base}}} \\
            &\leq  M' \cdot \left( 2^{k-2-N}\right) + (1-M') \cdot \left( 2^{k-1-N}\right) \\
            & <  2^{k-1-N}.
    \end{align*}
    \end{proof}

    From Lemmata~\ref{thm:counterexample-toolarge} and \ref{thm:counterexample-toolarge2}, we conclude that for Player 1 to obtain a utility $\geq 2^{k-1-N}$ Player 1 can only play actions in $\mathcal{A}^{(k)}$. However, under that restriction of only playing in  $\mathcal{A}^{(k)}$ it is obvious that the unique optimum is to play uniformly at random over all those actions (since it is a matching pennies game). This yields Theorem~\ref{thm:best-k-sparse-commitment}. 
    
\subsection{Proof of Proposition~\ref{thm:nonconvex_utility}}
We can directly use the example from the proof of Proposition~\ref{thm:disjoint_support}. Let $N \geq 5$. Recall that $v^*_k=2^{k-1-N}$. This is a convex, rather than concave function. More precisely, 
\begin{align*}
    v^*_{3}-v^*_{2} &= 2^{2-N} - 2^{1-N},\;\text{and}\\
    v^*_{4}-v^*_{3} &= 2^{3-N} - 2^{2-N} = 2\cdot (v^*_{3}-v^*_{2}).
\end{align*}
Since $v^*_{3}-v^*_{2}$ is strictly positive, this is clearly a strictly increasing function (and in fact exponentially increasing).
\subsection{Proof of Proposition~\ref{thm:nonsubmodular}}
We can again use the example from the proof of Proposition~\ref{thm:disjoint_support}. 
Let $S =\{ a_{2,0}, a_{2,1} \}$ and $S' = \{ a_{2,0}, a_{2,1}, a_{3,0}, a_{3,1} \}$. Clearly $S' \supset S$. Now recall that a set function $f(S)$ is submodular if and only if $f(S \cup \{z\}) - f(S) \geq f(S' \cup \{ z \}) - f(S')$ for all $S' \supseteq S$ and $z \not \in S$, i.e., there are decreasing ``marginal returns'' of adding $z$ to larger sets. Let $z=a_{3,2} $. 

\begin{align*}
    f(S) &= \frac{2}{2^N} &\tag{Uniform at random between $a_{2,0}, a_{2,1}$} \\
    f(S \cup \{z\}) &= \frac{2}{2^N} &\tag{Uniform at random between $a_{2,0}, a_{2,1}$}\\
    f(S') &= \frac{2}{2^N} &\tag{Uniform at random between $a_{2,0}, a_{2,1}$} \\
    f(S' \cup \{ z \}) &= \frac{4}{2^N} &\tag{Uniform at random between $a_{3,0}, a_{3,1}, z $} \\
\end{align*}
The proofs of the second to fourth lines follow from a similar reasoning as in the proof of Theorem~\ref{thm:best-k-sparse-commitment}. The idea is that unless all actions $a_{3,0}, a_{3,1}, a_{3,2}$ are all available, Player 1 should commit uniformly at random between $a_{2,0}, a_{2,1}$. This is because if some total nonzero mass is placed $M$ between $a_{3,0}$ and $a_{3,1}$, then at least one of them, $a^*$, is played with at least $M/2$ probability (pigeonhole principle). If Player 2 best responds by playing $a^*$, then \ref{eq:fat-enough} tells us that the payoff from fighting is $\leq -Mr/N^2$ while the base payoff (accrued from playing in $\mathcal{A}^{(3)}$) is no greater than $M$. For $r>N^2$, this is strictly less than $0$. Meanwhile, the base payoff from playing $a_{2,0}$ and $a_{2,1}$ is $(1-M)\cdot \frac{2}{2^N} < \frac{2}{2^N}$.

With these calculations, we can directly verify that the inequality required for submodularity is not satisfied for these choices of $S, S'$ and $z$.

\subsection{Proof of Proposition~\ref{thm:unif-is-bad}}

Consider a biased matching pennies game with a parameter $a > 0$, as depicted in Figure~\ref{fig:biased-matpen}. The utilities in this game range from $-1$ to $1$ for all values of $a$. The game has a unique Nash equilibrium, where Player 1 plays action $H$ with a probability of $\frac{2}{a+3}$, and Player 2 plays action $H$ with a probability of $\frac{a+1}{a+3}$. The value of this equilibrium is given by $v^* = \frac{1-a}{a(a+3)}$.

Now, consider a setting where Player 1 is restricted to $k$-uniform strategies as in the algorithm of \citet{mccarthy2018price}, while Player 2 responds optimally by choosing either $H$ or $T$. We examine a deviation $\epsilon > 0$ from the equilibrium value $v^*$ when Player 1 plays a $k$-uniform strategy $\bar{x}_k$:
\begin{equation}
    |\bar{v}_k - v^*|\leq\epsilon,
\end{equation}
where $\bar{v}_k$ is the value of $\bar{x}_k$ against some best-response of Player 2. If Player 2's best response is $H$, then $\bar{v}_k = \frac{p}{ak} - \frac{1}{a}(1 - \frac{p}{k})$, and manipulating the inequality gives us a lower bound
\begin{equation}
    k \geq \frac{(a+1)(a+3)}{a^2\epsilon+3a\epsilon+2a+2}.
\end{equation}
Similarly, to achieve the same deviation when Player 2 responds with $T$ (where $\bar{v}_k = -\frac{p}{k} + \frac{1}{a}(1 - \frac{p}{k})$), Player 1 needs
\begin{equation}
    k \geq  \frac{2(a+3)}{a^2\epsilon+3a\epsilon+4}.
\end{equation}
The required value of $k$ increases at least at a rate determined by the minimum of these bounds.

\begin{figure}[h]
    \centering
\begin{tikzpicture}
    \matrix[matrix of math nodes, 
            nodes in empty cells,
            nodes={minimum width=2cm, minimum height=1cm, anchor=center},
            column sep=-\pgflinewidth, row sep=-\pgflinewidth,
            ] (m) {
        & H & T \\
        H & |[draw]|\frac{1}{a},-\frac{1}{a} & |[draw]|-1,1 \\
        T & |[draw]|-\frac{1}{a},\frac{1}{a} & |[draw]|\frac{1}{a},-\frac{1}{a} \\
    };
\node[above=0.25cm] at ($(m-1-2)!0.5!(m-1-3)$){\textbf{Player 2}};
\node[rotate=90] at ($(m-2-1)!0.5!(m-3-1)+(-1.25,0)$){\textbf{Player 1}};
\end{tikzpicture}
    \caption{A biased matching pennies game with parameter $a$.}
    \label{fig:biased-matpen}
\end{figure}

This game also illustrates a case where our definition of a sparse commitment yields a significantly different (and higher quality) solution compared to the approach of \citet{mccarthy2018price}. Specifically, $k$-sparse strategies can exactly represent the equilibrium of this game already with $k=2$.

\subsection{Experimental Setup}
All the experiments on randomly generated games and structured sparsity experiments on patrolling games were conducted on a Linux 64-bit machine with 12th Gen Intel Core i9-12900, 2.4Ghz, with access to 64GB of RAM. The experiments on larger strategy spaces ran on a cluster consisting of machines with Intel Xeon Gold 6226, 2.9Ghz on a Linux 64-bit platform. Each run was restricted to 8 threads and 32GB of RAM. We used Gurobi 10.0.3 \cite{gurobi} for (MI)LPs.

The real-world physical graph of a US university campus was obtained using OSMnx \cite{boeing2017osmnx}. Algorithms using the best response oracle of the non-sparse player were set to use the equilibrium gap tolerance $\epsilon=10^{-3}$. More details about the domain construction can be found in the next section.

\subsection{Domain Descriptions and Additional Experimental Results}
 
\subsubsection{Randomly generated general-sum games}
\label{appendix:mult_milp_coeff}

We evaluate the single general-sum Stackelberg equilibrium MILP proposed in Section~\ref{appendix:single_milp}. We consider the same payoff matrices used to assess the multiple MILP~\eqref{multiple_milp} for sizes $n=m\in \{20, 30, \dots, 70\}$. As for the multiple MILP, we report the normalized runtime and relative utility
as a function of $k$, each averaged over all instances per game
size, and normalize the values of $k$ with respect to $k_{SE}$ (Figure~\ref{fig:sing_milp_none}). Once again, we observe a phase transition pattern in the runtime curves. However, the single MILP exhibits a higher runtime compared to the multiple MILP, especially for  games of sizes 30 and 40. Moreover, with the single MILP, we can still capture nearly $90\%$ of the game value with only a fraction of the support size.

\begin{figure*}[htb!]
    \centering
   \includegraphics[width=0.249\linewidth]{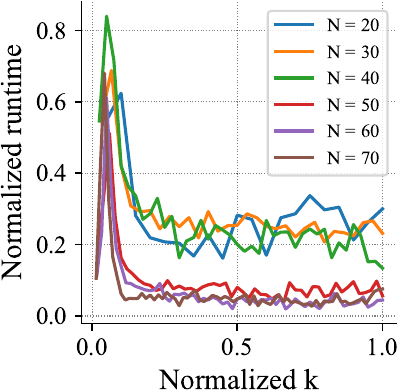}
   \includegraphics[width=0.249\linewidth]{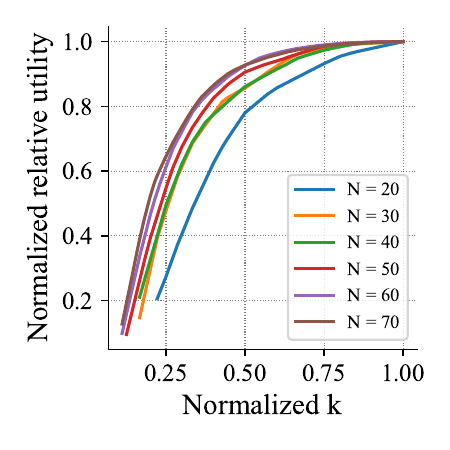}

\caption{Average normalized runtime and relative utility for solving general-sum games with the single MILP.}
\label{fig:sing_milp_none}
\end{figure*}

Additionally, we explore an alternative method for generating random payoff matrices for general-sum games. Again, the entries $A_{ij}$ are randomly chosen in $[-50,50]$. However, the entries $B_{ij}$ are now chosen such that the payoff matrix $B$ is negatively correlated with $A$ with some random noise. More precisely, we let $B_{ij}=c\times A_{ij}+N_{ij}$ where $c$ is a fixed coefficient and $N$ is a matrix of random uniform noise with entries in $[-85, 85]$. We consider different values for $c \in \{-0.8, -0.6, -0.4, -0.2\}$, and solve the games for sizes $n=m\in \{20,30,\dots,70\}$ using both the multiple and single MILPs. We present the normalized runtime and relative utility results across the different of the resulting Pearson product-moment correlation coefficients in Figures~\ref{fig:mult_milp_coeffs:runtime},~\ref{fig:mult_milp_coeffs:utility},~\ref{fig:sing_milp_coeffs:runtime}, and~\ref{fig:sing_milp_coeffs:utility}. The utility plots indicate that both MILPs exhibit very similar performance, while the runtime plots suggest that the single MILP generally has a higher runtime compared to the multiple MILP.

\subsubsection{Air defence battery placements}
We extracted maps of different regions (randomly chosen) of the world from the video game \textit{Hearts of Iron IV}. See Figure~\ref{fig:AD172} for examples. Note that the only purpose of using these maps is because land masses have been broken to different regions, we did not use utilize any other information from the video game. The optimal sparse commitments were conducted on a M1 Macbook Pro with 16 GB of RAM. The focus of this experiment is not on running times, but quality and visualization of solutions. 
We use the Basic Method (B) to compute optimal commitments.
In general, each of these experiments takes longer to run than randomly generated matrix games. That said, note that the air defence game is a fair bit larger; typically $n\approx800, m\approx 200$.

\begin{figure}[ht]
    \includegraphics[trim={1.8cm 0 1.5cm 0}, width=0.33\textwidth]{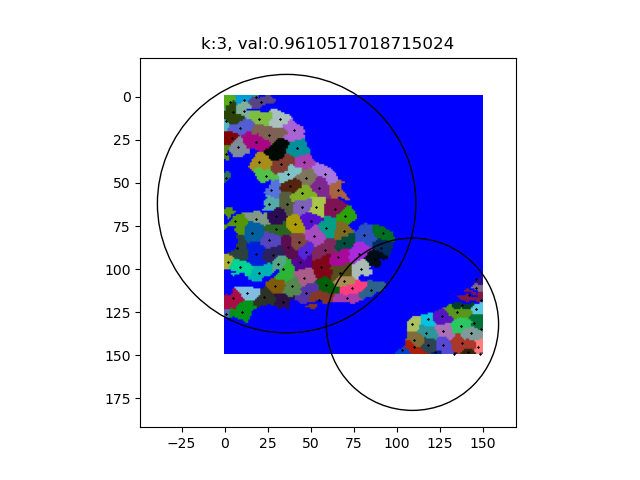}
    \includegraphics[trim={1.8cm 0 1.5cm 0}, width=0.33\textwidth]{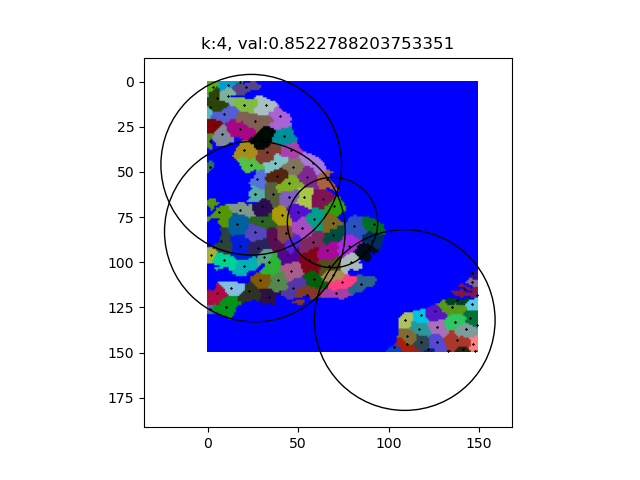}
    \includegraphics[trim={1.8cm 0 1.5cm 0}, width=0.33\textwidth]{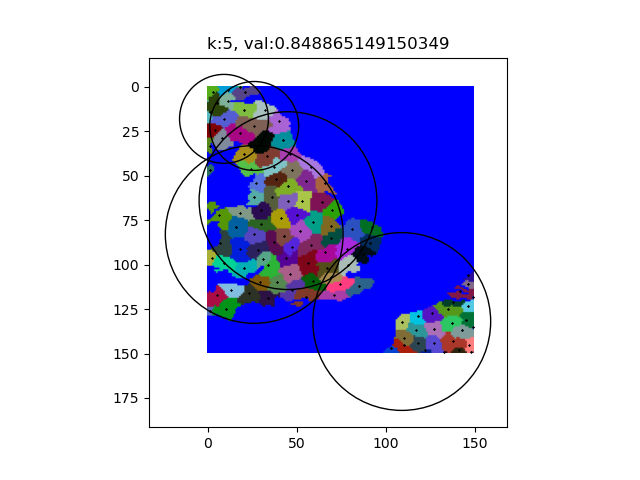}
    \\
    \includegraphics[trim={1.8cm 0 1.5cm 0}, width=0.33\textwidth]{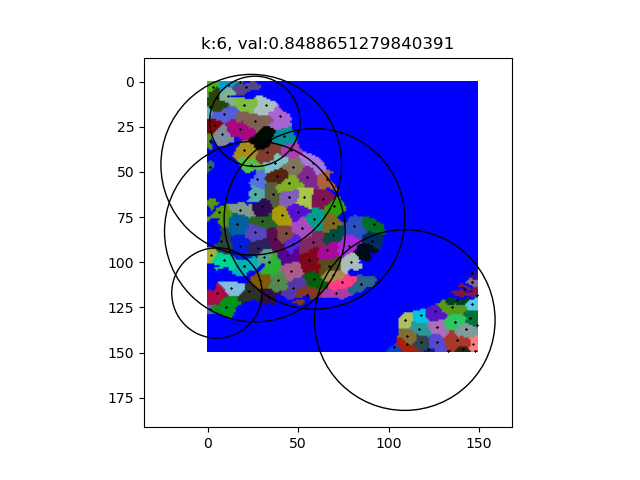}
    \includegraphics[trim={1.8cm 0 1.5cm 0}, width=0.33\textwidth]{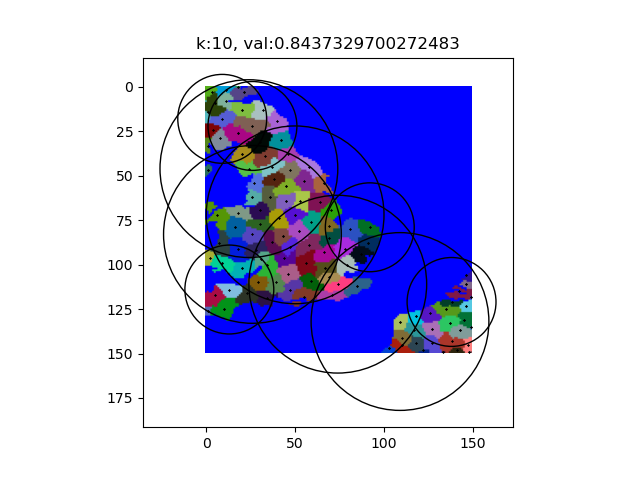}
    \includegraphics[trim={1.8cm 0 1.5cm 0}, width=0.33\textwidth]{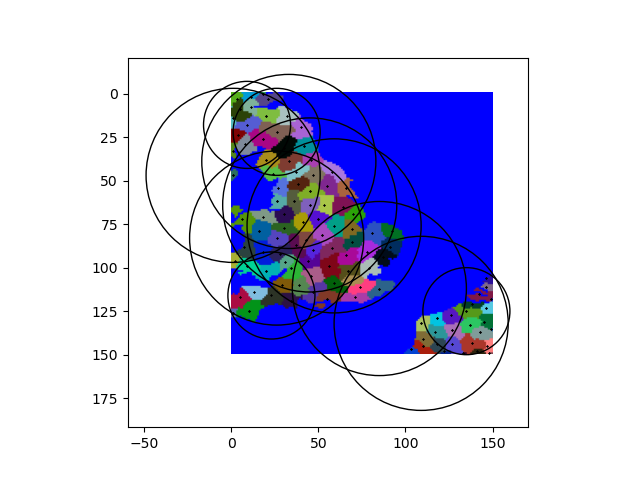}
\caption{Results for air defence coverage near a region in the United Kingdom over different values of $k$. Colored, non-blue regions are land regions, and blue shades are sea/oceans. Circles are coverage circles (out of 4 different sizes available) for locations that are chosen with strictly non-zero probability. `val' refers to the expected damage taken under that value of $k$. Bottom right picture are results for $k=\infty$.}
\label{fig:AD172}
\end{figure}

For each instance, we allow for $4$ different air defense batteries. Our job is to choose a single location (given by the map's regions) and select a single air defense battery to place there. Each type of battery contains a different coverage effectiveness and radius. Coverage effectiveness is given by the probability that the attack succeeds on a region covered by the air defense battery, while radius influences the size of the circles as seen in Figure~\ref{fig:AD172}. Player 2 chooses a single region to attack. If it is not covered, then the target is hit with probability 1. If it is covered, then with some probability the attack goes through; this probability is dependent on the type of air defense chosen. Each region has a value depending on whether it is a regular region or a city. Cities are worth more than regular regions. The amount that Player 2 gets is the value of Player 2's target, multiplied by the probability the attack goes through; we call this the \textit{expected damage taken} by Player 1. The game is zero-sum, so Player 1's payoff is likewise defined. Note crucially that Player 1 seeks to \textit{minimize} the expected damage taken; hence this is a min-max problem rather than a max-min one. 

In the above experiment, we choose radii of $(25, 50, 75,100)$ \footnote{these units are in game pixels, not ground distances. Even though the earth is round we did not perform any transformations. This is to keep matters simple.} and coverage effectiveness of $(0.05, 0.20, 0.85, 0.9)$ respectively. Cities were worth 1.1, while regular regions 1.0. Note that for the latter, lower numbers mean more effective defenses. Figure~\ref{fig:AD172} show some results for different values of $k$. Here \textit{val} refers to the expected payoff to \textit{Player 2}.  Qualitatively, we can see that this is reminiscent of some kind of ``covering problem'', when $k=2$, we are forced to use an effective battery to cover the mainland of the UK, but as $k$ increases, we are able to utilize more effective defenses, but place them with varying probabilities throughout the mainland. There is small coverage at the bottom right region of the English channel which is always ``serviced'' by the same battery at the same location. Nonetheless, in general we see that these optimal placements vary greatly as $k$ changes, similar in spirit to Proposition~\ref{thm:disjoint_support}. 

Finally, we report performance in Figure~\ref{fig:AD_results} for 6 different maps over 3 different experimental settings for a range of $k$. Note that these are different maps (and therefore incomparable across settings). We stress once again that Player 1 (the sparse player) \textit{is minimizing}. We can see that unlike before, not as much as the true optimum is achievable for low $k$. Nonetheless, having $k=4$ seems to yield reasonable results. We can clearly see the not non-increasing behavior as $k$ increases. We can also see in a concrete setting the non-convexity (which we max expect from a minimizing sparse player) of the expected damage taken, just as Proposition~\ref{thm:nonconvex_utility} says. It is also worth pointing out that the shapes of each plot in a graph look surprisingly similar, e.g., where the knots are. We suspect this is because these maps have regions roughly spread evenly in a plane. (since this was from a video game) Therefore, barring the existence of large water bodies and the slight variations caused by cities,  we are ultimately solving very similar ``set cover'' problems each time.  

\begin{figure}[ht]
    \centering
    \includegraphics[width=0.33\linewidth]{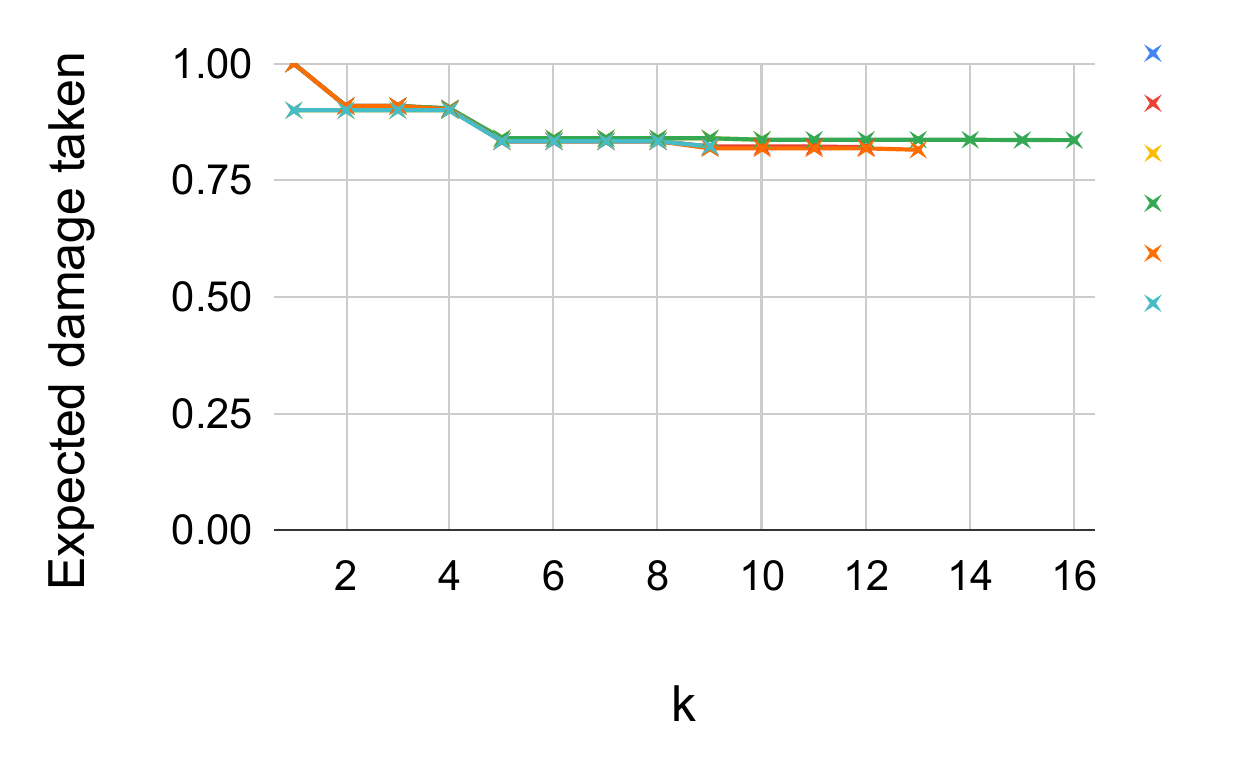}
    \includegraphics[width=0.33\linewidth]{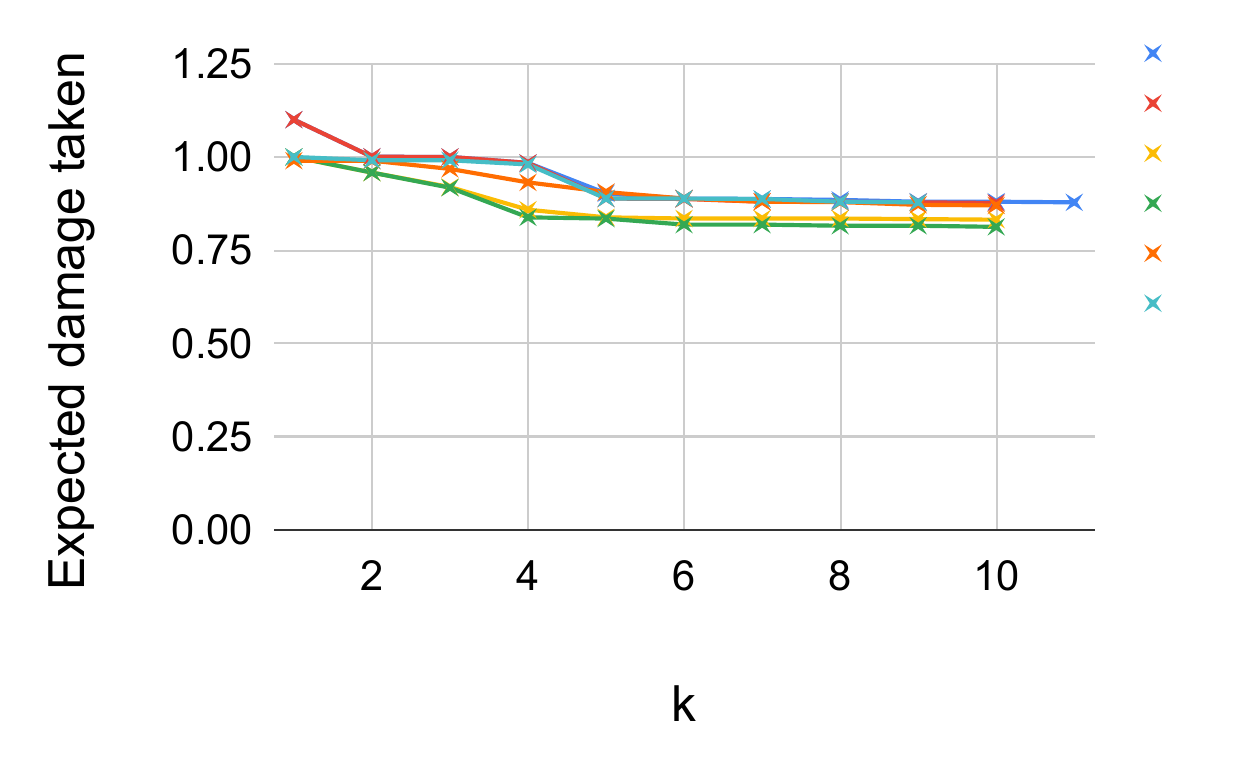}
    \includegraphics[width=0.33\linewidth]{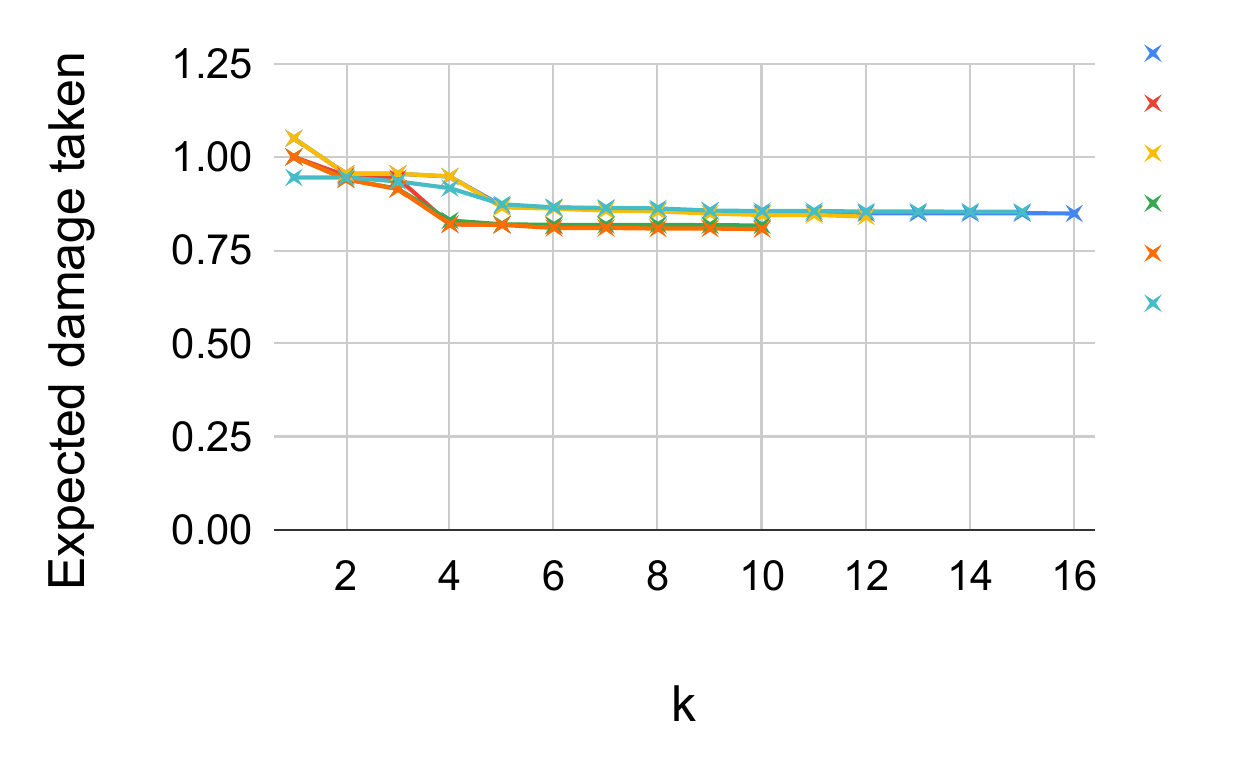} \\
    \includegraphics[width=0.33\linewidth]{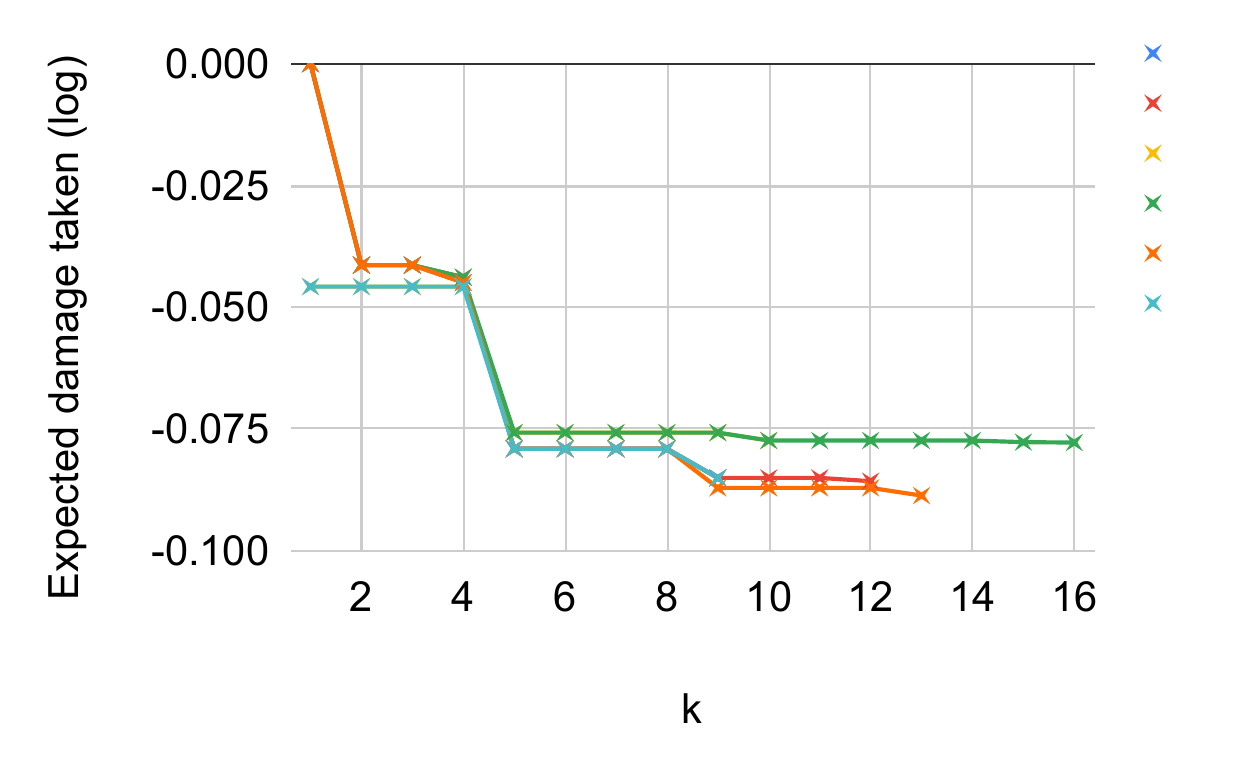}
    \includegraphics[width=0.33\linewidth]{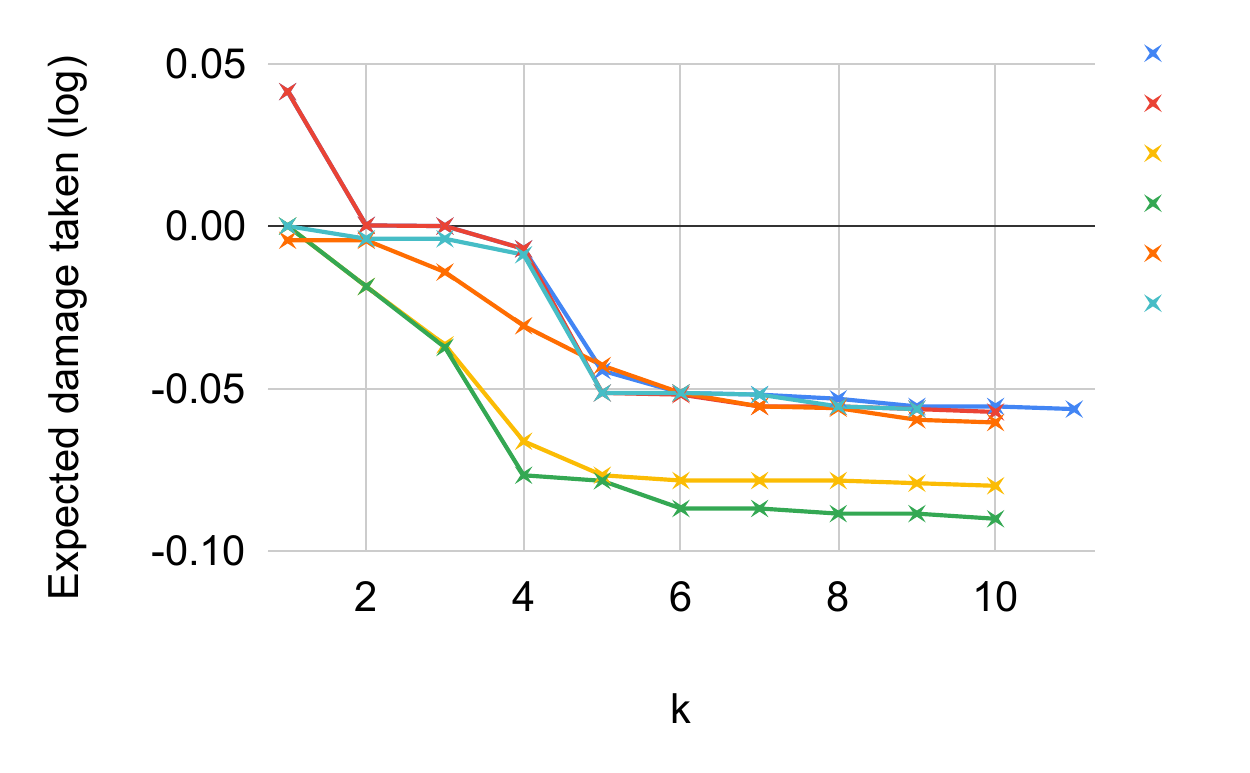}
    \includegraphics[width=0.33\linewidth]{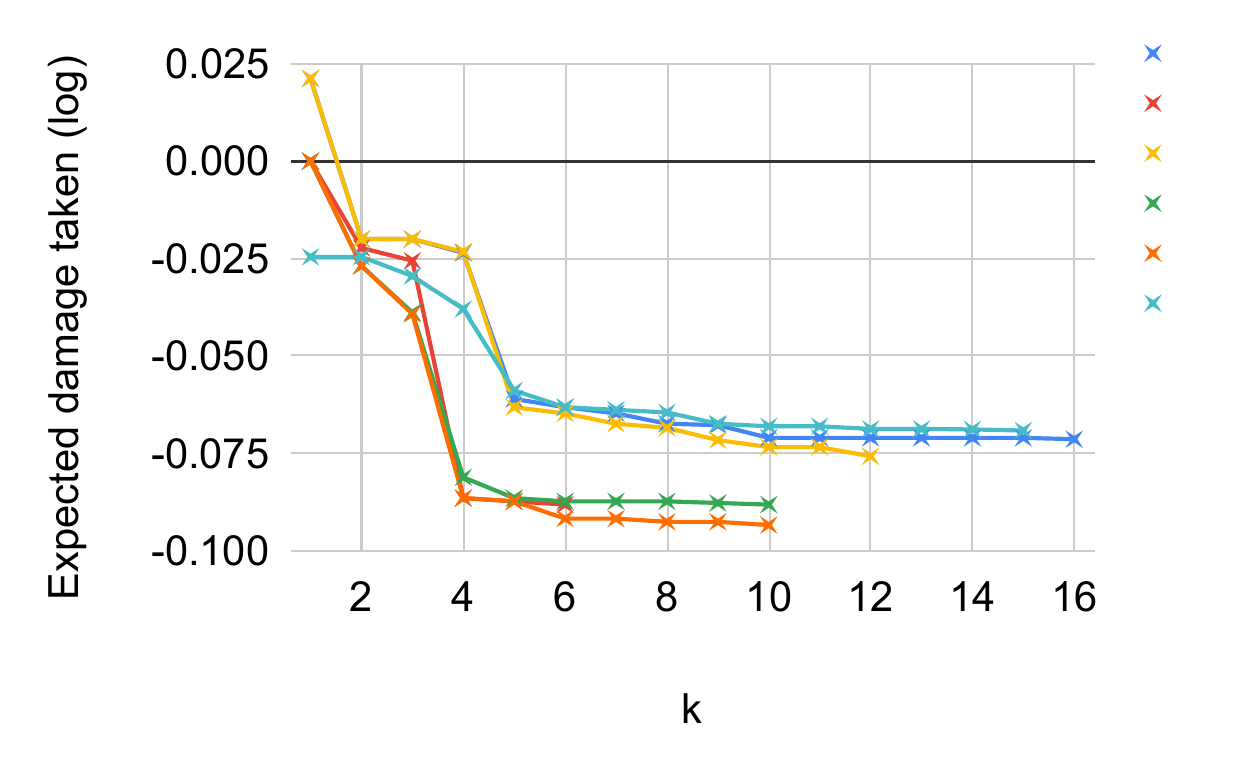}
    % \vspace{-.5cm}
    \caption{Expected damage taken over different $k$ for 3 settings (\textbf{top}), and in log scale (\textbf{bottom}). The radii for all air defences are $[25,50,75,100]$ for all 3 settings. Regular regions have a value of 1, and cities have a value of 1, 1.1, 1.05 from left to right. The probability of an attacker going through under coverage is (in terms of air defense types), from left to right, $[(0.01, 0.20, 0.85, 0.9), (0.05, 0.20, 0.85, 0.9), (0.05, 0.20, 0.85, 0.9)]$.}
    \label{fig:AD_results}
\end{figure}

\begin{figure}[ht]
    \centering
    \includegraphics[width=.5\linewidth]{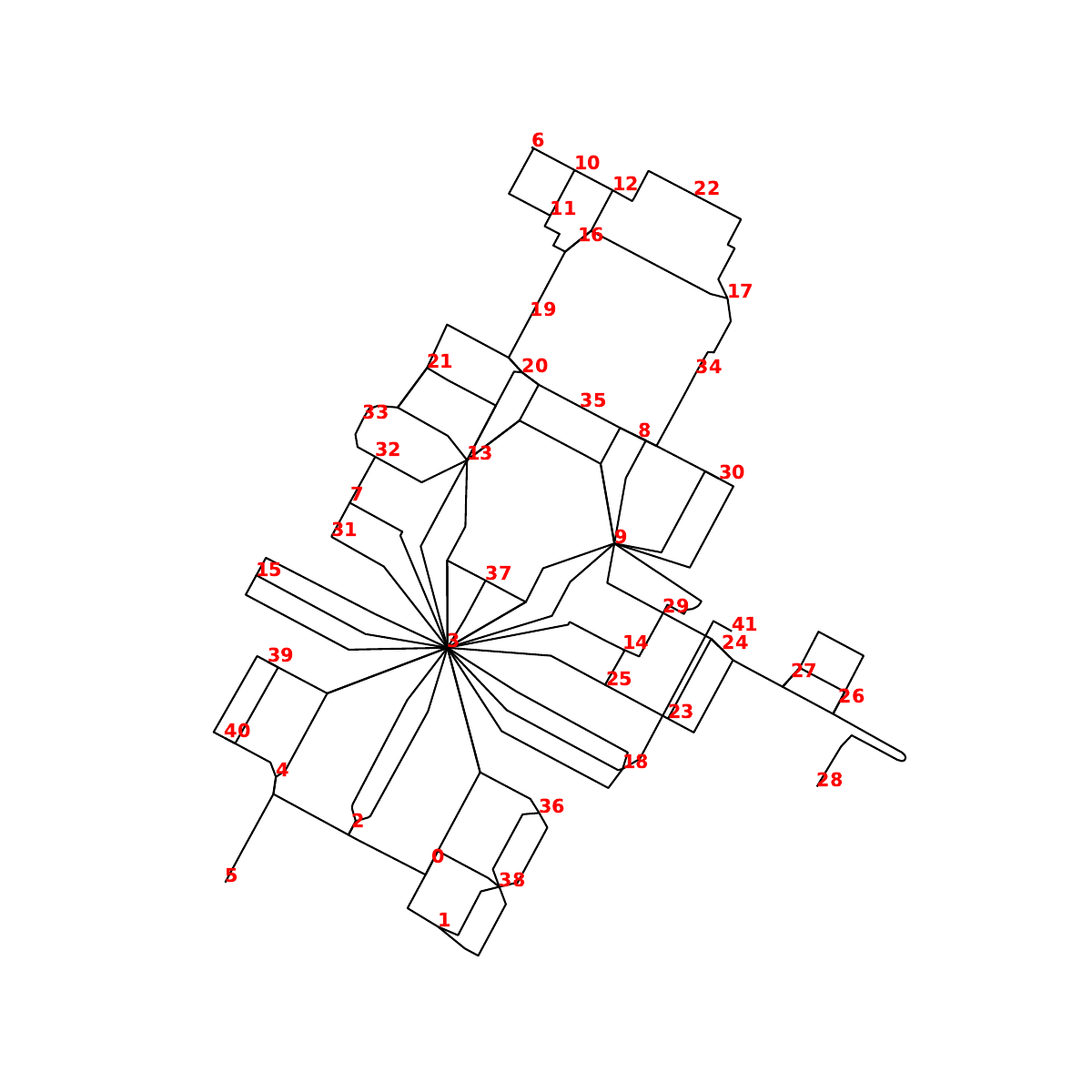}
    \caption{Physical graph of a US university campus our patrolling games are played on. Red numbers are the location indices.}
    \label{fig:columbia}
\end{figure}

\subsubsection{Patrolling games} 

The patrolling games were conceptualized as layered graph security games~\cite{vcerny2024layered} played on a physical graph, as illustrated in Figure~\ref{fig:columbia}. We assumed that each edge could be traversed in a single time step. The game was constructed by time-expanding the physical graph into a pursuit-evasion game, following the methodology from layered graph security games~\cite{vcerny2024layered}, using code available online\footnote{\url{https://github.com/CoffeeAndConvexity/LayeredGraphGamesGenerators}}. In the basic version of the game, the professor selects a path within the time-expanded graph, starting from a node within their designated starting points set. It is important to note that the number of possible paths grows exponentially with the size of the graph. Depending on the time horizon, this version was utilized for the vanilla and structured sparsity experiments (for shorter horizons) or for experiments involving a large strategy space for the professor (for longer horizons). In the scenario with a large strategy space for the student, the student selects a path originating from one of their starting points, while the professor sets up only a single checkpoint. When both players select paths, we refer to the setting as having both strategy spaces large. The specific settings for individual scenarios are detailed in Table~\ref{tab:scenarios}. In a given scenario, certain vertices in the physical graph are identified as desirable exit nodes for the student, offering higher utility than others. When generating a game for a particular scenario, the utility of a vertex is assigned uniformly at random from the interval $[6,10]$ if the vertex belongs to the set of desirable nodes, and uniformly at random from the interval $[1,5]$ otherwise.

\begin{table}[h!]
\centering
\begin{tabular}{|p{3.3cm}|p{1.5cm}|p{3.8cm}|p{3.8cm}|p{3cm}|}
\hline
\textbf{Experiment} & \textbf{Scenario} & \textbf{Starting Points Professor} & \textbf{Starting Points Student} & \textbf{Desirable Exits} \\ \hline
Vanilla sparsity          & --        &   7, 9, 14, 19, 29, 35, 36, 39                               &   --                              &  13, 30, 32, 37, 40                     \\ \hline
Structured sparsity        & --        &  7, 9, 14, 19, 29, 35, 36, 39                                &    --                             & 13, 30, 32, 37, 40                       \\ \hline
Large professor space        & Scenario 0        &  2, 4, 10, 24, 33                                &    --                            &   9, 13, 37, 35                     \\ \hline
        & Scenario 1        &     7, 9, 14, 19, 29, 35, 36, 39                             &      --                          &  13, 30, 32, 37, 40                      \\ \hline
Large student space        & --        &   --                               &   4, 9, 12, 26, 35, 40                            &  11, 13, 21, 25, 29, 35                      \\ \hline
Large both spaces        & Scenario 0        &   12, 16, 19, 21, 27, 32                               &    12, 16, 19, 21, 27, 32                            &    9, 27, 29, 33, 37                    \\ \hline
        & Scenario 1        &   4, 8, 13, 16, 22, 24                               &   4, 8, 13, 16, 22, 24                             &   0, 3, 8, 14, 35                     \\ \hline
        & Scenario 2        &   3, 5, 8, 21, 27, 39                               &    3, 5, 8, 21, 27, 39                           & 15, 20, 21, 35, 40                       \\ \hline
\end{tabular}
\caption{Settings of individual scenarios used in our patrolling game experiments.}
\label{tab:scenarios}
\end{table}

We now consider again the vanilla and sparsity experiments, but increase the path length to $T=6$, meaning that the professor selects a path of length 6.  The student still chooses a single location to escape to. Again, we impose vanilla sparsity constraints on the paths distribution and structured sparsity constraints on the starting points distribution of the professor. Figure~\ref{fig:lgg:vanilla_structured_d6} shows the results across 21 instances. The observed results are generally as expected. A phase transition pattern in the runtime curves is clearly present in the vanilla sparsity scenario but absent in the structured sparsity scenario. In both cases, however, the professor is able to achieve more than $90\%$ of the game value for $k$ as low as 3. 

\begin{figure*}[htb!]
    \centering
    \begin{subfigure}[t]{0.246\linewidth}
    \centering
    \includegraphics[width=\textwidth]
    {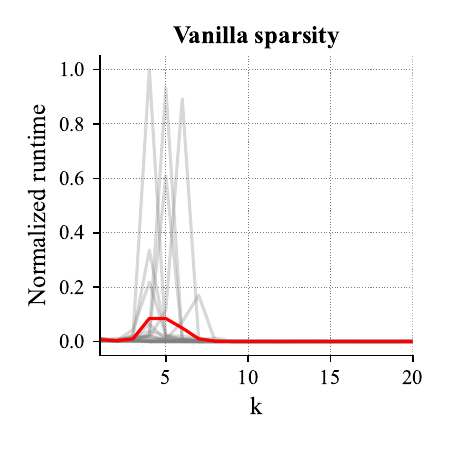}
\end{subfigure}
\hfill
\begin{subfigure}[t]{0.246\linewidth}
    \centering
    \includegraphics[width=\textwidth]{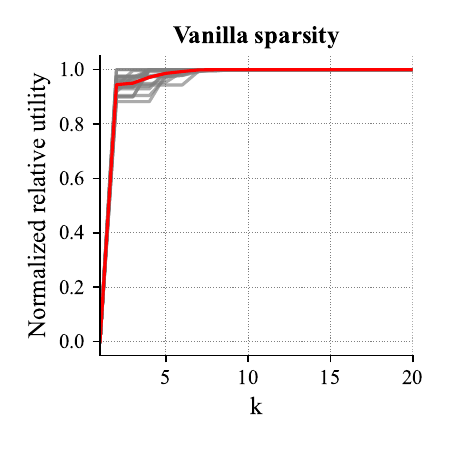}
\end{subfigure}
%     \caption{Randomly generated zero-sum games}
%     \label{fig:rgg:zerosum}
% \end{figure*}
% \begin{figure*}[t]
%     \centering
    \begin{subfigure}[t]{0.246\linewidth}
    \centering
    \includegraphics[width=\textwidth]{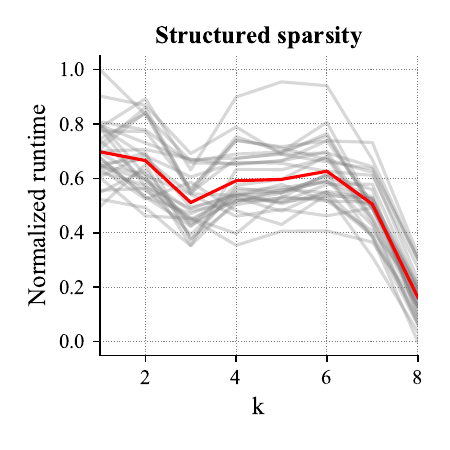}
\end{subfigure}
\hfill
\begin{subfigure}[t]{0.246\linewidth}
    \centering
    \includegraphics[width=\textwidth]{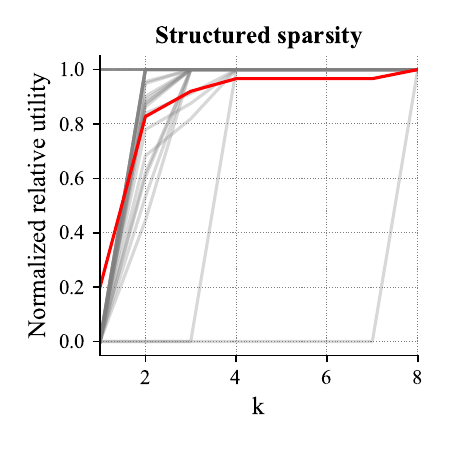}
\end{subfigure}
    \caption{Average (red) normalized runtime and relative utility over 30 instances (gray) of a PE on a university campus. We impose \textit{vanilla sparsity} on path distribution (\textbf{left}) and \textit{structured sparsity} on the starting point distribution of Player 1 (\textbf{right}).}
    \label{fig:lgg:vanilla_structured_d6}
\end{figure*}

\subsubsection{Games with large strategy spaces}

We conduct a deeper exploration of patrolling games with large strategy spaces by examining additional path lengths for each scenario and introducing new scenarios. We also provide runtime results for all experiments.

In games where the sparse player has a large action space, we evaluate a new set of starting points and desirable final locations (scenario 0) and vary $T$ from 5 to 8. The results are generally consistent with those from the previous scenario (scenario 1). However, at $T = 8$, there is a notable increase in utility as $k$ rises from 3 to 5: approximately $20\%$ of the true game value is achieved at $k=3$, $70\%$ at $k=4$, and $90\%$ at $k=5$ (Figure~\ref{fig:exp2:scen0}, top). 

For games where both players have large strategy spaces, we experiment on a new scenario (scenario 2), and consider a higher path length $T=7$ for all scenarios. The normalized relative utility results, reported in Figure~\ref{fig:exp2:scen0} (bottom), show that Player 1 is able to achieve around $90\%$ of the game value for $k=5$ and around $80\%$ for $k=3$ at $T=6,7$. The poor performance observed in scenario 1 at $T=7$ is attributed to the huge number of paths (more than 16k) combined with an extremely small support size ($k=2$). 

Moreover, we present in Figures~\ref{fig:exp2:runtime},~\ref{fig:exp3:runtime} and~\ref{fig:exp4:runtime} the runtime results for the experiments with large action spaces, across all scenarios and depths. For games where only Player 1 has a large strategy space and games where both players have large strategy spaces, we obtained results for only lower values of $k$ in most experiments within the 12-hour time limit. The average runtime tends to increase after a certain value of $k$, such as $k=6$ for scenario 0 at depth 5, when Player 1 has a large action space. However, in games where Player 2 has a large strategy space instead, the average runtime grows more slowly, allowing us to obtain results for instances up to $k=15$. Overall, the results show that runtime generally increases with $k$, though with some fluctuations.

\subsubsection{Comparison with $k$-uniform strategies} 
The blue curves in Figure~\ref{fig:exp2:scen0} represent the average relative utility normalized by the Nash value for the case where the sparse player uses $k$-sparse strategies. As expected, optimal $k$-sparse strategies generally outperform $k$-uniform strategies for $k>2$. When $k=1,2$, $k$-sparse and $k$-uniform strategies yield nearly identical utility.

\begin{figure*}[t]
    \centering
   \includegraphics[width=1\linewidth]{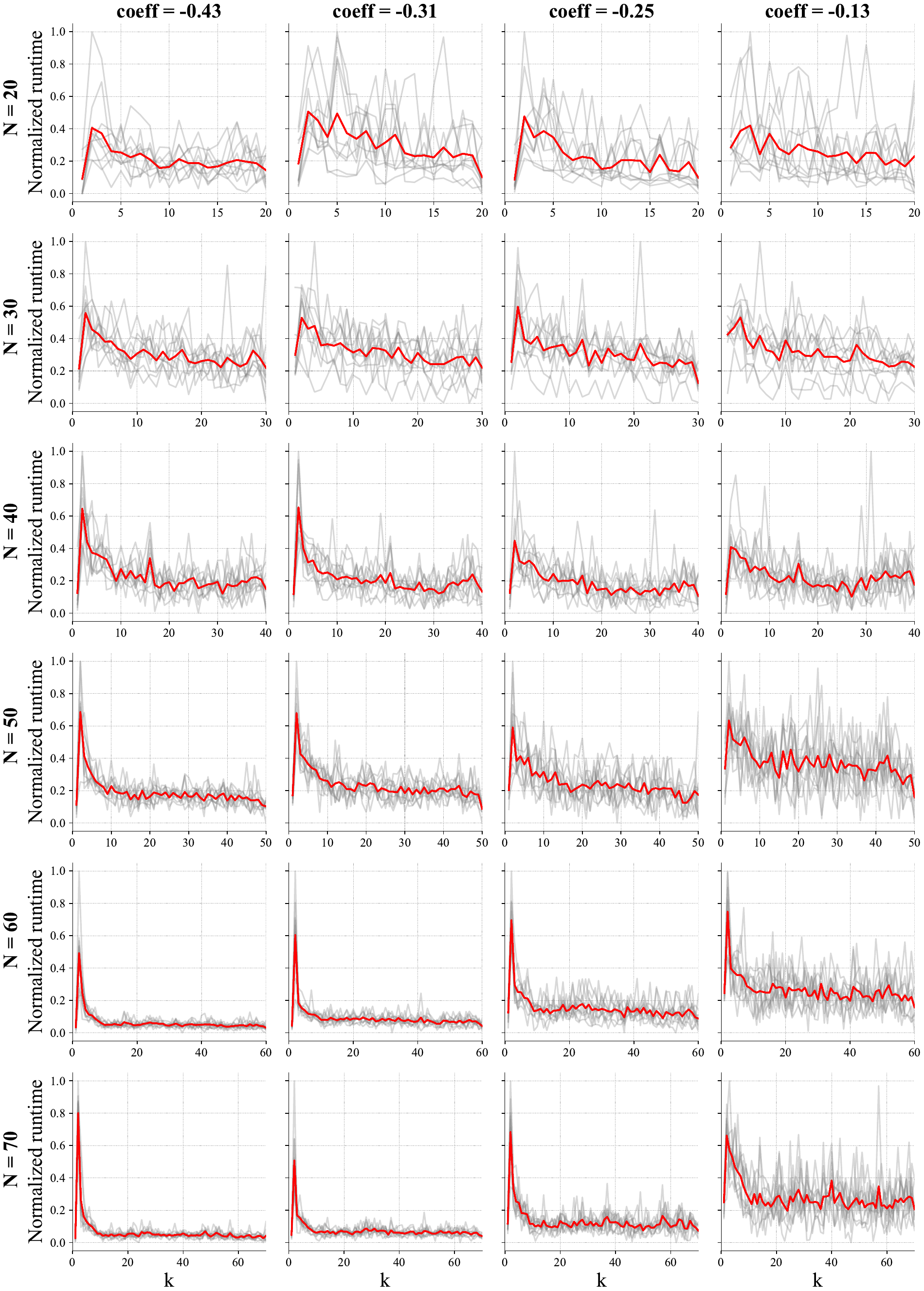}
  
\caption{Average normalized runtime across different game sizes and correlation coefficients using the multiple MILP.}
\label{fig:mult_milp_coeffs:runtime}
\end{figure*}

\begin{figure*}[t]
    \centering
   \includegraphics[width=1\linewidth]{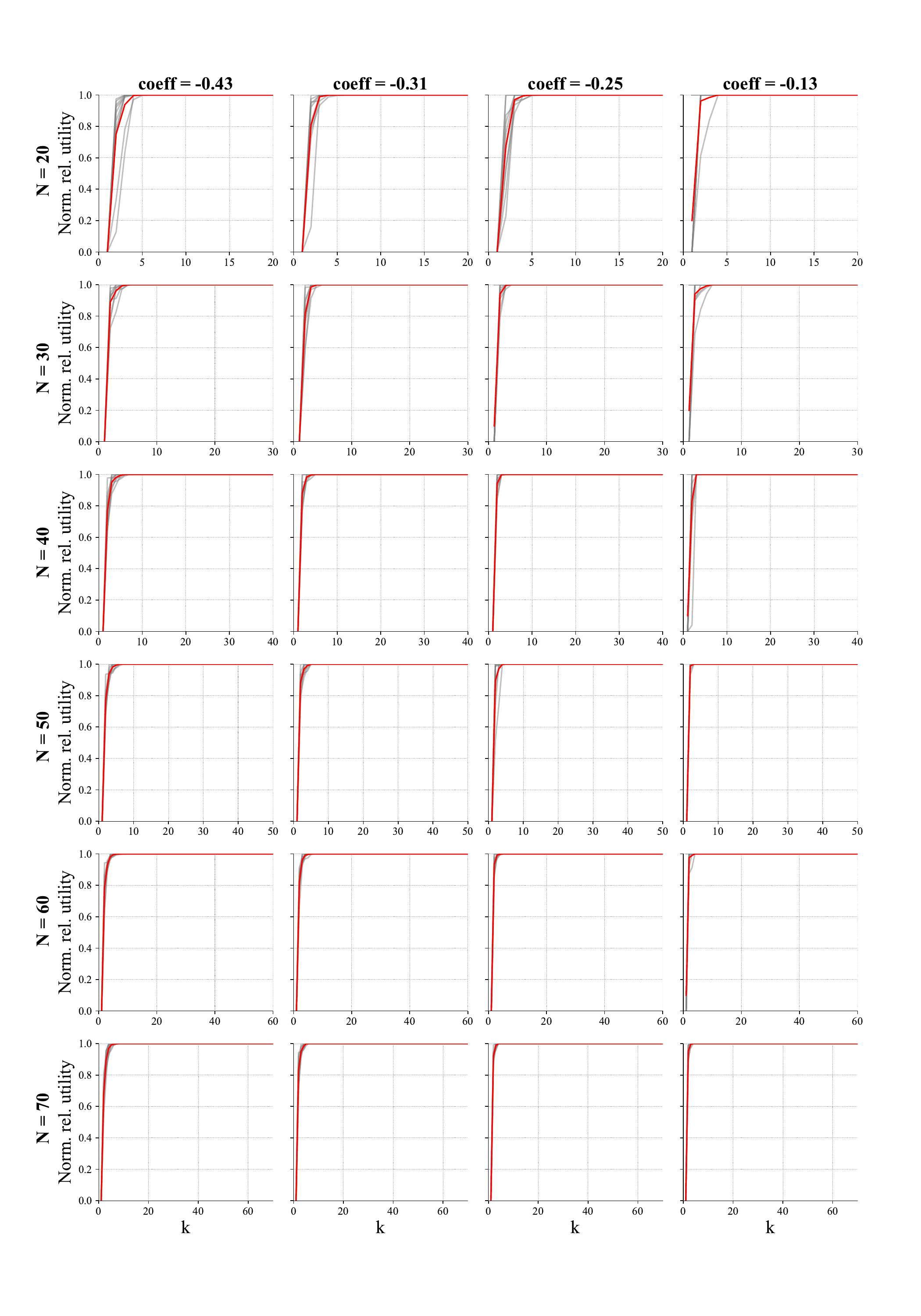}
  
\caption{Average normalized relative utility across different game sizes and correlation coefficients using the multiple MILP.}
\label{fig:mult_milp_coeffs:utility}
\end{figure*}

\begin{figure*}[t]
    \centering
   \includegraphics[width=1\linewidth]{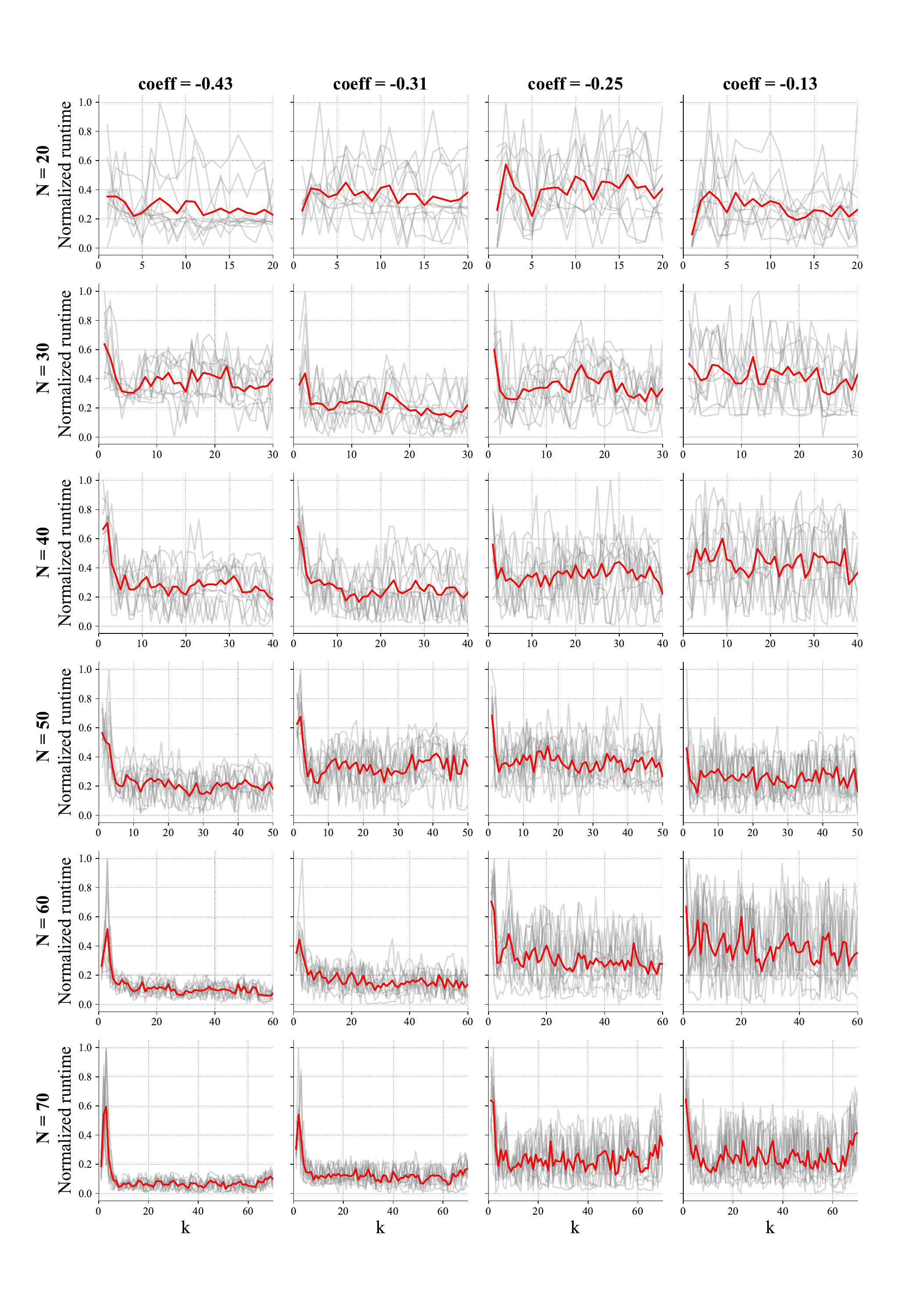}
  
\caption{Average normalized runtime across different game sizes and correlation coefficients using the single MILP.}
\label{fig:sing_milp_coeffs:runtime}
\end{figure*}

\begin{figure*}[t]
    \centering
   \includegraphics[width=1\linewidth]{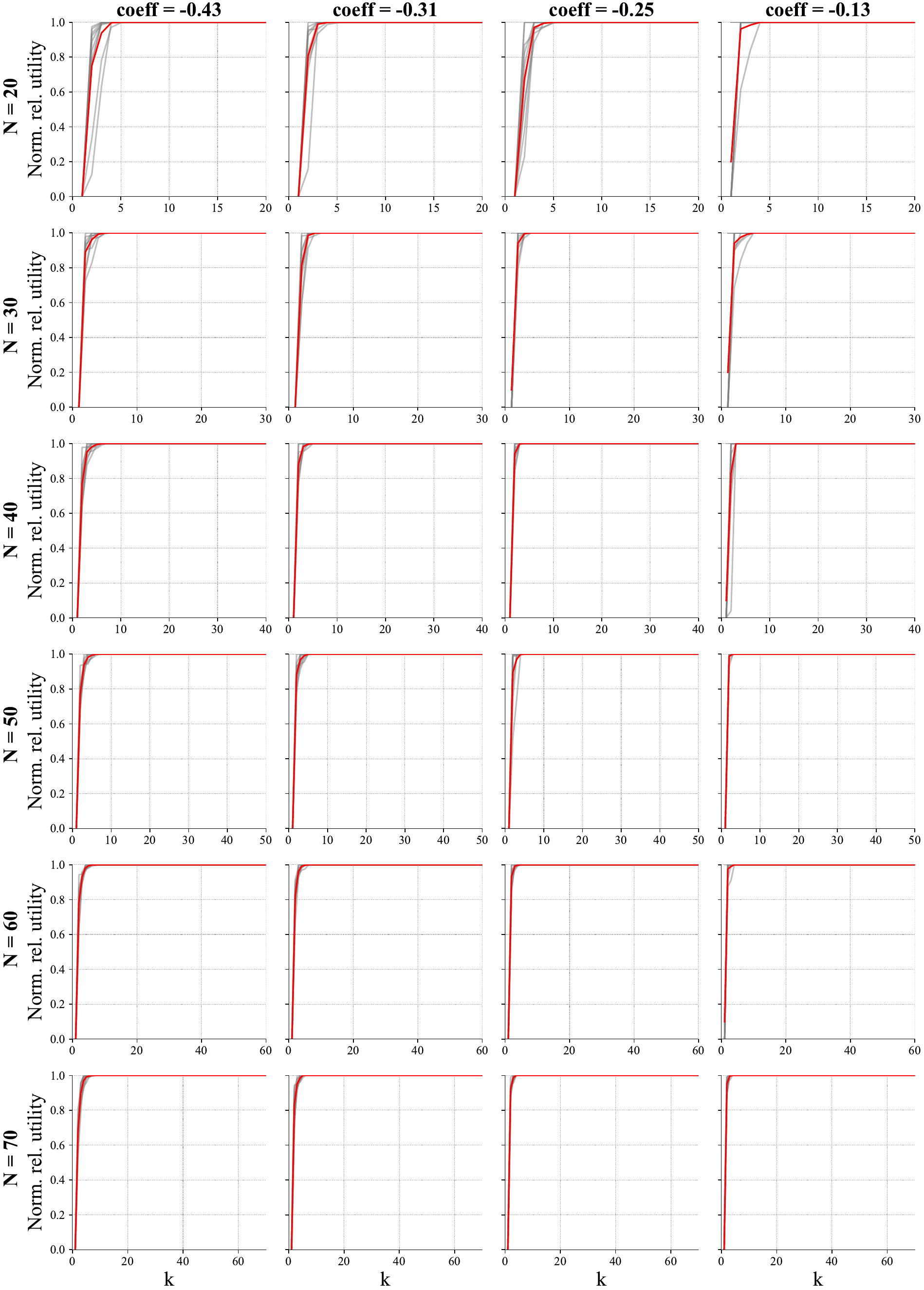}
  
\caption{Average normalized relative utility across different game sizes and correlation coefficients using the single MILP.}
\label{fig:sing_milp_coeffs:utility}
\end{figure*}

\begin{figure*}[t]
    \centering
   \includegraphics[width=1\linewidth]{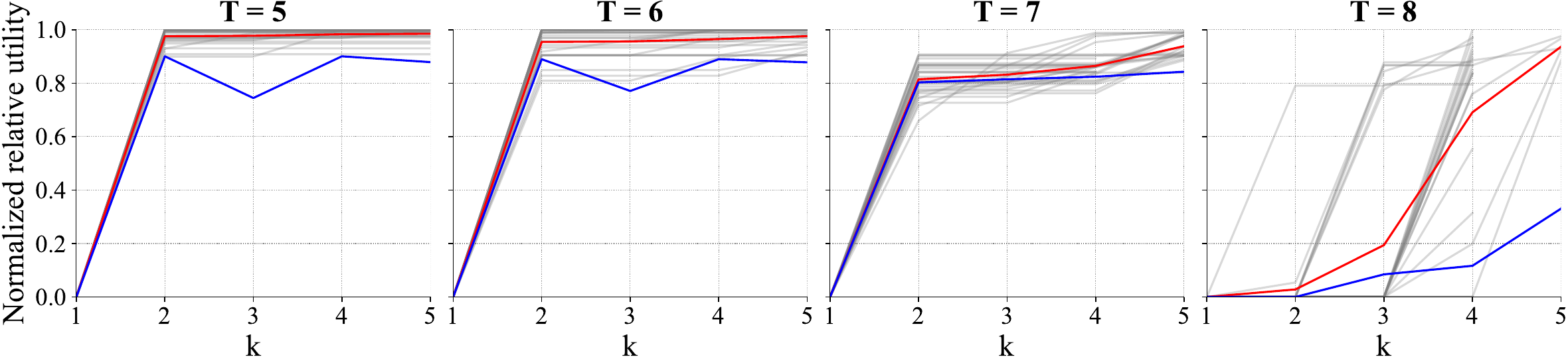}
    \includegraphics[width=1\linewidth]{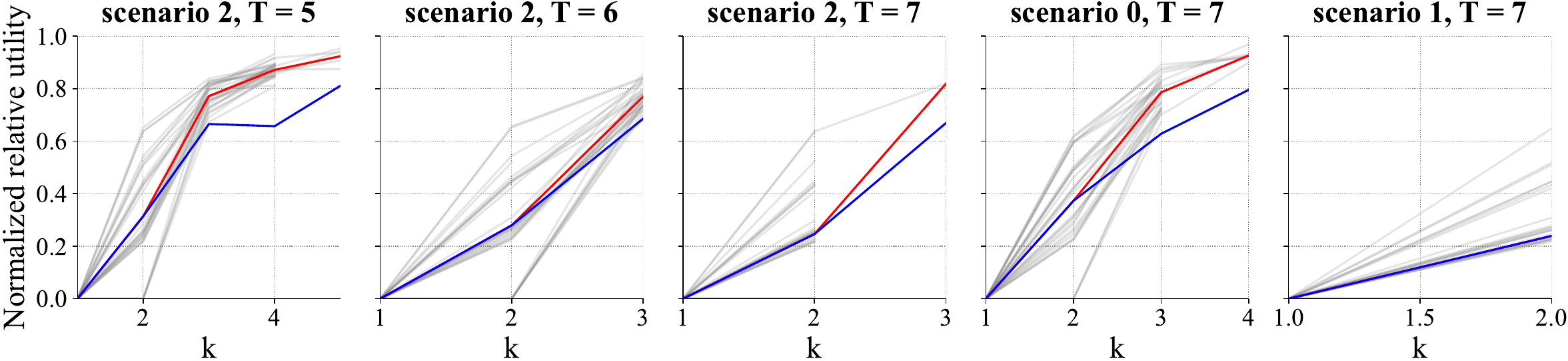}
  
\caption{Average normalized relative utility curves for optimal $k$-sparse (red) and $k$-uniform (blue) commitments. Results for individual instances using optimal $k$-sparse commitments are in gray. \textbf{Top row:} sparse player has a large strategy space (scenario 0). \textbf{Bottom row:} both players have large strategy spaces.}
\label{fig:exp2:scen0}
\end{figure*}

\begin{figure*}[t]
    \centering
   \includegraphics[width=1\linewidth]{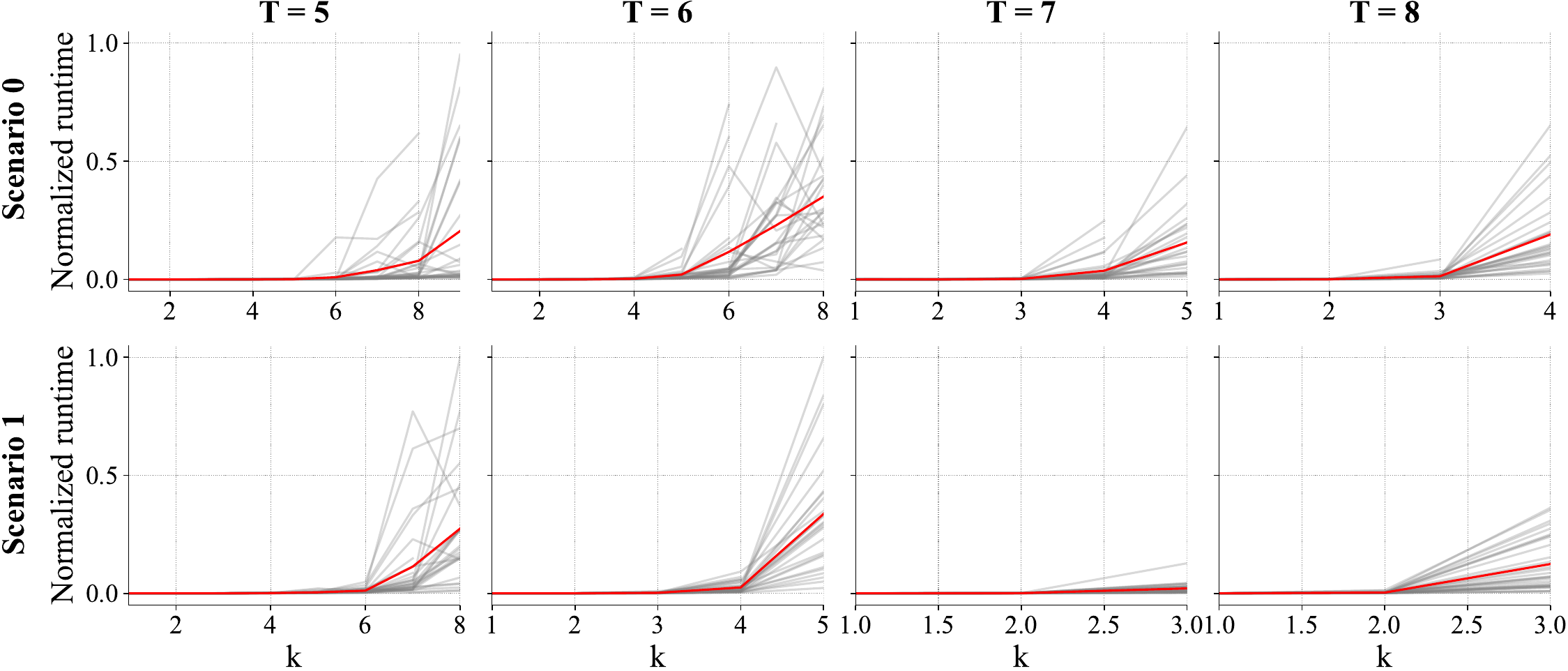}
  
\caption{Average normalized runtime (red) across 30 instances (gray) for scenarios 0 and 1 at different depth values, when the sparse player has a large strategy space.}
\label{fig:exp2:runtime}
\end{figure*}

\begin{figure*}[t]
    \centering
   \includegraphics[width=1\linewidth]{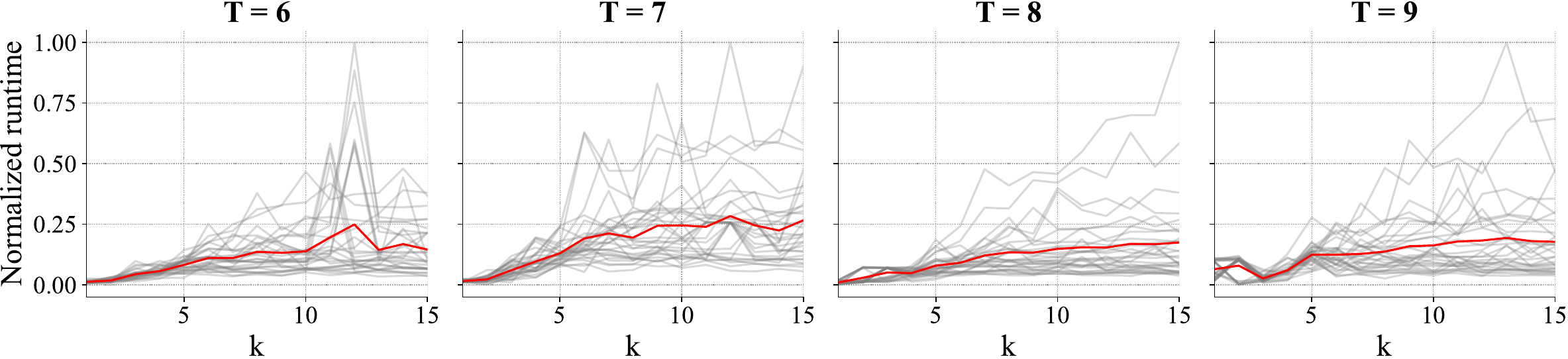}
  
\caption{Average normalized runtime (red) across 30 instances (gray) for different depth values, when the non-sparse player has a large strategy space.}
\label{fig:exp3:runtime}
\end{figure*}

\begin{figure*}[t]
    \centering
   \includegraphics[width=1\linewidth]{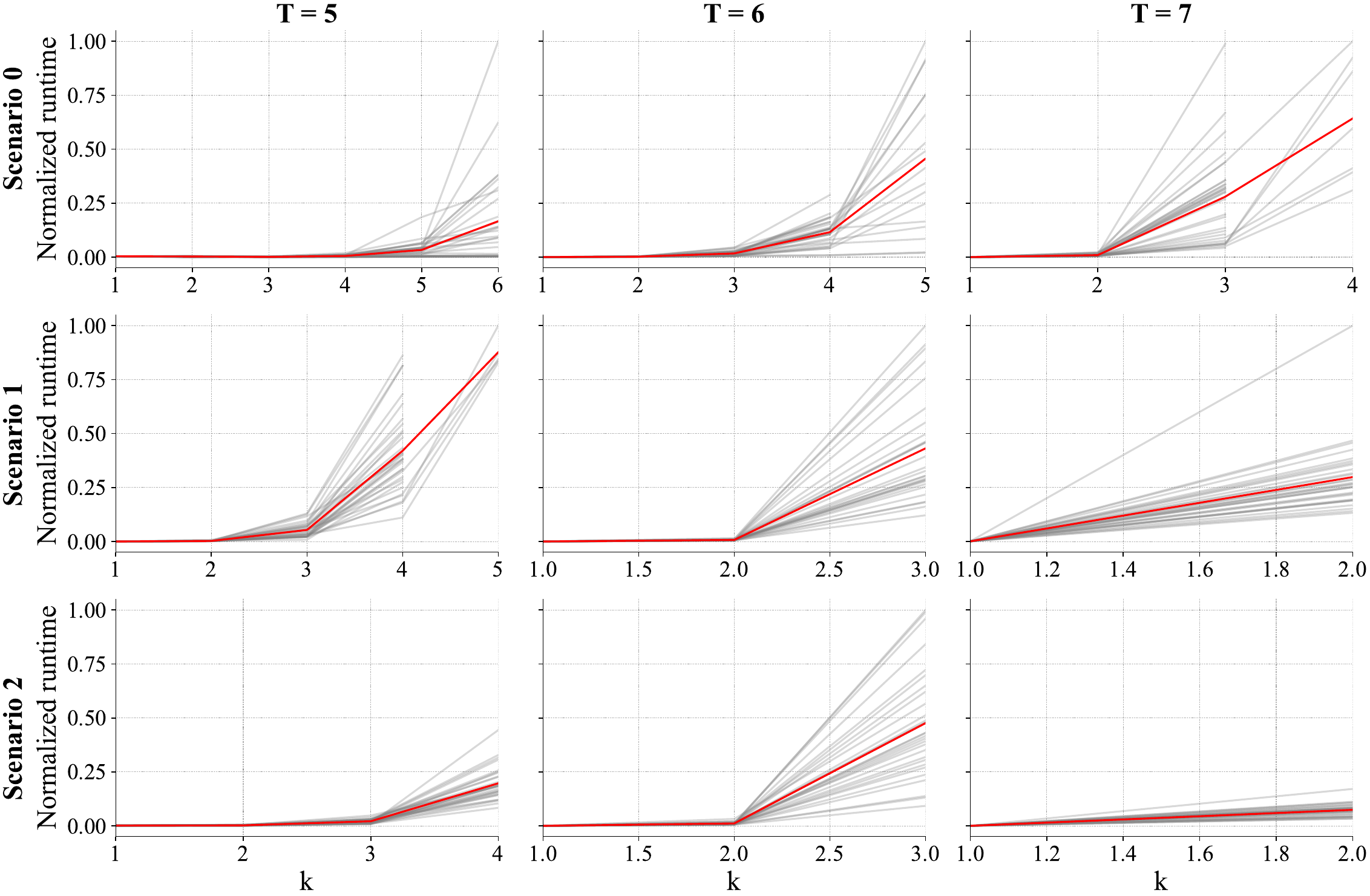}
  
\caption{Average normalized runtime (red) across 30 instances (gray) for scenarios 0, 1 and 2 at different depth values, when both players have large strategy spaces.}
\label{fig:exp4:runtime}
\end{figure*}

\end{document}